\documentclass[3p]{elsarticle} %3p,draft
%\usepackage{helvet}
%\DeclareFontFamily{OT1}{ams}{}
 \myfooter[C]{}
\renewcommand{\thispagestyle}[1]{}

\makeatother

\date{}
\def\texpsfig#1#2#3{\vbox{\kern #3\hbox{\includegraphics{#1}\kern #2}}\typeout{(#1)}}
\usepackage{url,hyperref}
\hypersetup{
    colorlinks=true,
    linkcolor=blue,
    filecolor=magenta,      
    urlcolor=cyan,
    %pdftitle={Overleaf Example},
    pdfpagemode=FullScreen,
    }
    
\usepackage{bbm}
\usepackage{eurosym}                           % Symbol for Euro
\usepackage[latin1]{inputenc}                  % Input encoding for Western Europe
\usepackage{url}                               % Display URLs in typewriter font
\usepackage{longtable}                         % Tables running of several pages
\usepackage{array}                             % more table options
\usepackage{graphicx,color}
\usepackage{amsthm}
\usepackage{amsbsy}
\usepackage{amssymb}
\usepackage{amsmath}
\usepackage{enumerate}
\usepackage{graphicx,color}
\usepackage{epsfig}
\usepackage[ruled,vlined]{algorithm2e}
\usepackage{multirow,bigdelim}
\usepackage[table]{xcolor}% http://ctan.org/pkg/xcolor
\usepackage{natbib}
\usepackage{cleveref}
\usepackage{float}
\setlength{\bibsep}{0.0pt}
%\bibliographystyle{abrev}
%\bibliographystyle{abbrvnat}
%%\setcitestyle{authoryear,open={((},close={))}} %Citation-related commands
%\bibliographystyle{unsrtnat}
% \bibliographystyle{abbrvnat}
\bibliographystyle{siam}
%\setcitestyle{authoryear,open={(},close={)}}

\theoremstyle{plain}
\newtheorem{theorem}{Theorem}[section]

\newtheorem{dfn}[theorem]{Definition}

\newtheorem{example}{Example}[section]
\newtheorem*{rem}{Remark}
\theoremstyle{remark}

\theoremstyle{plain}
\setlength{\intextsep}{8pt plus 1.0pt minus 2.0pt}
\newtheorem{lem}[theorem]{Lemma}

\theoremstyle{definition}

\newcommand{\e}{{\rm e}}        % "e" number
\def\R{\mathbb{ R}}             % Real number
\def\E{\mathbb{ E}}             % Expectation
\def\Q{\mathbb{ Q}}  
\def\N{\mathbb{ N}}  % Measure Q

\def\P{\mathbb{ P}}             % Measure P
             % Complex number

\def\F{\mathcal{F}}             % Filtration
            % Covariance
            % Covariance
            % Covariance

   % Variance
\def\Var{\mathbb{V}\text{ar}}   % Variance
\renewcommand{\d}{{\rm d}}      % straight "d" in in integration and ODEs, \int_a^b f(x)\d x
               % dW in SDEs- Brownian noise.

\DeclareMathOperator*{\argmin}{arg\,min}

\def\dx{{\rm d}x}

\def\1{{\mathbbm{1}}}            % Indicator function

\theoremstyle{plain}% default

\usepackage[margin=1cm]{caption}
%\usepackage[left=1.3in, right=1.3in, top=1in
%,bottom=1in,includefoot,headheight=13.6pt]{geometry}
%\usepackage{fancyhdr}
\geometry{left=1.3in, right=1.3in, top=1in,bottom=1in, includefoot, headheight=13.6pt}	         %Page size layout.
\captionsetup{margin=1cm,font=small}                %Decrease caption margin
%\setlength{\parindent}{0cm}		         %No paragraph indents.
	 %Redefining footnote line at bottom of the page.
\numberwithin{equation}{section}	     %Equation numbering per section
\title{\raggedright Volatility Parametrizations with Random Coefficients: Analytic Flexibility for Implied Volatility Surfaces }
\begin{document}
\author[1,3]{NICOLA F. ZAUGG \corref{cor1}}
\ead{N.F.Zaugg@uu.nl}
\author[1]{LEONARDO PEROTTI}
\ead{L.Perotti@uu.nl}
\author[1,2]{\raggedright LECH A. GRZELAK}
\ead{L.A.Grzelak@uu.nl}
\cortext[cor1]{Corresponding author.}
\address[1]{Mathematical Institute, Utrecht University, Utrecht, the Netherlands}
\address[2]{Financial Engineering, Rabobank, Utrecht, the Netherlands}
\address[3]{LGT Private Bank, Zurich, Switzerland}

\begin{abstract}
It is a market practice to express market-implied volatilities in some parametric form. The most popular parametrizations are based on or inspired by an underlying stochastic model, like the Heston model (SVI method [\textit{Quantitative Finance, 14 (2014), pp. 59-71}]) or the SABR model (SABR-parametrization [\textit{Wilmott Magazine, (2002), pp. 84-108}]). Their popularity is often driven by a closed-form representation enabling efficient calibration. However, these representations indirectly impose a model-specific volatility structure on observable market quotes. When the market's volatility does not follow the parametric model regime, the calibration procedure will fail or lead to extreme parameters, indicating inconsistency. This article addresses this critical limitation - we propose an arbitrage-free framework for letting the parameters from the parametric implied volatility formula be random. The method enhances the existing parametrizations and enables a significant widening of the spectrum of permissible shapes of implied volatilities while preserving analyticity and, therefore, computation efficiency. We demonstrate the effectiveness of the novel method on real data from short-term index and equity options, where the standard parametrizations fail to capture market dynamics. Our results show that the proposed method is particularly powerful in modeling the implied volatility curves of short expiry options preceding an earnings announcement, when the risk-neutral probability density function exhibits a bimodal form.
\noindent 
\end{abstract}

\begin{keyword}
Implied Volatility Parametrizations, Randomization, Stochastic Parameters, Short-term Options, W-shaped Implied Volatility, Market-Making.
\end{keyword}
\maketitle
{\let\thefootnote\relax\footnotetext{The views expressed in this paper are the author's personal views and do not necessarily reflect the views or policies of their current or past employers. The authors have no competing interests.}}
{\let\thefootnote\relax\footnotetext{An accompanying Python code to the article containing examples is available on \href{https://github.com/NFZaugg/ImpliedVolatilityRandomization}{GitHub}.}}
%%%%%%%%%%%%%%%%%%%%%%%%%%%%%%%%%%%%%%%%%%%%%%%%%%%%%%%%%%%%
\section{Introduction}
%%%%%%%%%%%%%%%%%%%%%%%%%%%%%%%%%%%%%%%%%%%%%%%%%%%%%%%%%%%
Obtaining a clean implied volatility surface from the options market is a fundamental aspect of modeling financial derivatives. Volatility surfaces reflect the current price level of vanilla option contracts and are often used as inputs to advanced derivative models. The models are then utilized to price exotic derivatives, set margin requirements for derivative trades, or are used by market makers to offer the most competitive price in the options market.
The essence of constructing an implied volatility surface is to encode the discrete and noisy option price quotes from the market into a clean, continuous surface, which provides a value of the Black-Scholes implied volatility for any desired combination of time to expiry and strike on a domain of interest. The main challenge is to obtain an implied volatility surface that is arbitrage-free, meaning that it does not lead to any static arbitrage opportunities for the implied option prices.

The existing literature on this problem is elaborate, and more innovative techniques are continuously being developed. This paper aims to provide a generic method to enable \emph{implied volatility surface parametrizations} to become more effective by means of \emph{parameter randomization}. It has been shown in the past \cite{jacquier2019randomized, jacquier2018black, github}, that randomization of model parameters in stochastic models can increase their flexibility and, therefore, enable a better fit to the market. We apply a randomization technique to improve the flexibility of implied volatility parametrizations and obtain a new parametrization with only limited additional parameters. The extra flexibility helps the parametrization to accurately model the market volatility when the classical parametrizations fail, such as before earnings announcements or generally for short-maturity options.
\subsection{Problem Setting}
We study the premium (or price) of a vanilla European option on an asset $S$ with a predetermined expiry date $T$ and strike price $K$. The price of such an option is written as
\begin{equation}
\label{eq:bs}
     V_{c/p}(T,K) = BS_{c/p}(t_0,S_0,T,K;\hat\sigma(T,K)),
\end{equation}
where $BS_{c/p}$ is the Black-Scholes formula for a call/put option and a constant risk-free rate $r$, $\hat\sigma(T,K) \geq 0 $ is called the \emph{(Black-Scholes) implied volatility} and $S_0$ the spot price of the asset. The Black-Scholes formula is a fundamental result in mathematical finance based on the Black-Scholes model, which assumes a lognormal probability distribution for the asset price process $S_t, t \geq t_0$. Since this assumption is known to be unrealistic, the Black-Scholes formula is most often used as a convenient way to quote prices of options available in the market as implied volatilities $\hat\sigma(T,K)$, rather than a pricing model. A market generally quotes options as implied volatilities on a discrete grid of strike prices and expiries. Suppose that $\Theta_{mkt}$ denotes a set of implied volatility quotes of the asset $S$ for $N$ expiries $\{T_1,T_2,\dots,T_{N}\}$ and $M$ strikes $\{K_1,K_2,\dots,K_M\}$:
\begin{equation}
    \Theta_{mkt} = \left\{\hat{\sigma}_{mkt}(T_n,K_m) : 1 \leq n \leq N , 1 \leq m \leq M\right\}.
\end{equation} 
Under the assumption of an efficient market, the market option quotes are set at a fair value based on market expectations of the future payoff. Practitioners are often interested in the price of an option for a pair $(T,K)$ whose price $\hat{\sigma}_{mkt}(T,K) \notin \Theta_{mkt}$ is not in the commonly traded set, for instance, as market makers attempt to set fair option prices. Since the market does not directly provide this value, the objective is to extend $\Theta_{mkt}$ to a smooth function $\hat\sigma(T,K) \colon \Pi \to \R_+$ of implied volatilities, which provides such a quote for any pair on a desired domain $\Pi \subset (t_0,\infty)\times \R_+$. Since $\hat\sigma(T,K)$ is a smooth function in two variables, it defines a smooth surface on $\R^3$ and is therefore called the \emph{implied volatility surface} (IV surface, or volatility surface). 
A primary condition of $\hat\sigma(T,K)$ is that the function is an extension of the discrete quotes, meaning that\begin{equation}
\label{eq:primary}
     \hat{\sigma}(T_n,K_m) = \hat{\sigma}_{mkt}(T_n,K_m), \quad 1 \leq n \leq N , 1 \leq m \leq M.
\end{equation}
Secondly, the resulting volatility surface should always be free of static arbitrage opportunities. With the Black-Scholes equation, the volatility surface defines the theoretical prices for put and call options on the entire domain. If these option prices present arbitrage opportunities by buying or selling options with different strikes and expiries, the volatility surface is not deemed suitable since market participants could exploit these, thereby eliminating the opportunities. The absence of arbitrage is guaranteed by several conditions that can be equivalently formulated on the option prices or directly on the volatility surface~\cite{guo2016generalized}.

Once the implied volatility function $\hat\sigma(T,K)$ is constructed from the market quotes and is arbitrage-free, it has various applications. Firstly, the option surface provides information on the market's expectation of the probability distribution of underlying asset $S$. For instance, the risk-neutral probability distribution of the asset $S$ can be derived with the Breeden-Litzenberger formula~\cite{breeden1978prices}, and the surface can be used to calibrate pricing models for exotic derivatives. Furthermore, the surface can be used for trading purposes by assessing volatility expectations, detecting arbitrage opportunities, or providing liquidity to the market.
\subsection{Literature Review}
Implied volatility surface modeling has been studied extensively in the past. Successful approaches have been developed using stochastic and local volatility models, statistical properties of the implied volatility surface, and interpolation schemes. As of late, new streams of data-driven approaches have been considered, further enhanced with machine learning tools~\cite{franccois2022venturing, ackerer2020deep, gonon2024operator,chataigner2020deep}. An extensive review of the classical methods is found in~\cite{homescu2011implied}. Below, we summarize the approaches that are relevant to our methodology.

The simplest way to obtain the function $\hat\sigma(T,K)$ is to choose an interpolation and extrapolation scheme to interpolate the available volatility quotes on the continuum. It is well known~\cite{fengler2009arbitrage} that such an approach is not arbitrage-free since a direct interpolation on the volatility surface often leads to a non-convex option pricing function. The mispricing is further exaggerated for assets with few option quotes available in the market, where the interpolation determines a large portion of the surface, increasing the risk of generating implausible or arbitrageable prices. An arbitrage-free interpolation scheme is feasible on the option pricing function rather than the implied volatilities~\cite{le2021arbitrage, andreasen2011volatility}. The interpolation methods remain problematic for assets for which only a few high-quality option prices are available. Furthermore, to obtain the implied volatilities from the market prices, an inversion of the Black-Scholes formula is required, which can only be achieved by a root-finding algorithm. 

The surfaces based on interpolation schemes have the feature that the market quotes $\Theta_{mkt}$ are always matched by $\hat\sigma(T,K)$ and therefore that (\ref{eq:primary}) is true. The market quotes on the surface are fixed first, and the interpolated points are added in addition to completing the surface. \emph{Volatility surface parametrizations} offer an alternative to this scheme. Rather than fixing the market quotes on the surface, the approach uses a fitting algorithm to choose the best-fitting surface from a predefined set of surfaces. First, a set of smooth surfaces $(T,K) \mapsto \hat\sigma(T,K; \overline p)$ are defined by parameters $\overline{p} = (p_1,p_2,\dots,p_m)$. The parametrization is chosen so that the surface is always free of arbitrage. The domain of the parameter space defines the set of all possible surfaces that can be created. The implied volatility surface for an asset $S$ is then chosen such that the difference to the market quotes $\Theta_{mkt}$ is minimized. While the surface will not exactly match $\Theta_{mkt}$, the difference is negligible if the space of possible surfaces is ``large" enough. 

Volatility parametrizations are often directly derived from a parametric asset model, such as a stochastic volatility model. Although a volatility surface can be generally derived for any asset price model, not all models offer a convenient analytical form for the volatility surface. Defining an analytically tractable and realistic model (in terms of fitting the market) is thus a challenging task. The \emph{Stochastic Volatility Inspired (SVI)} parametrization~\cite{gatheral2014arbitrage} was derived from a stochastic volatility model and enjoyed great success in the past. The parametrization was extended and improved in multiple works~\cite{mingone2022no, corbetta2019robust}. Another successful parametrization is the \emph{SABR parametrization}~\cite{Hagan:2002}, a direct result of the Hagan et al. formula provided by the SABR model (Stochastic-Alpha-Beta-Rho). Parametric volatility surfaces are convenient for numerous reasons. Firstly, since the parametrizations are derived from an asset model, the absence of arbitrage is guaranteed as long as the parameters remain in a well-defined range. Simple constraints on the parameter space are sufficient to ensure the absence of arbitrage. Secondly, the parameters can be attributed to the shape of the volatility surface. This allows for a simple way to compare surfaces and the expression of the dynamics of the surface in terms of the dynamics of the parameter. Thirdly, theoretically, no minimum number of market quotes is required to fit an arbitrage-free surface. Although a higher number of quotes is desired for stability in the calibration, a parametrization can be fit to any number of quotes and remain arbitrage-free. This is of particular importance for assets with only a few quotes or only in a particular strike region, where little to no information is provided for the tails of the volatility curves. 

The main drawback of parametrizations is that there is no guarantee that a surface that is appropriately close to the market data can be found. The flexibility of the surface is limited by the chosen parametrization, which does not cover all viable (i.e. arbitrage-free) volatility shapes. If the market conditions are not within the scope of the parametrization, the fitting process fails, and the parametrization misrepresents the actual market conditions. These out-of-scope market conditions can be systemic, meaning that parametric models are generally unfit to model a certain market behavior. For instance, it is well-known that short-maturity options in equity markets exhibit a steeper at-the-money implied volatility term structure than attainable under regular stochastic volatility models~\cite{gatheral2011volatility}. Alternatively, the market condition can be out of scope due to irregular behavior in the market, such as spikes in volatility or higher-than-usual uncertainty. It is not uncommon for very short-term options markets (near expiry) to exhibit a W-shaped volatility shape, or ``mustache" shapes~\cite{glasserman2023w} before an earnings announcement of the underlying equity. These shapes of volatility surface arise from \emph{bimodal} risk-neutral probability density functions for the stock price, which reflect the dichotomous nature of the earnings, which can either have a positive or negative effect on the asset price. Classical diffusion models\footnote{It needs to be mentioned that SVI-type parametrizations do not rely on the an underlying model, but their development was merely inspired by volatility shapes of the Heston model. Nevertheless, one observes the same characteristics in terms of fit as classical stochastic volatility models.} are unable to create such shapes due to the single continuous diffusive driver, and the corresponding volatility parametrization is limited in such cases. This limitation highlights the industry's need for more flexible parameterized surfaces to achieve better fits in such market situations.

\subsection{Contributions}
In this paper, we study volatility surfaces in cases when the parametric surfaces reach the limit of their flexibility. We introduce a generic method to enhance the flexibility in such cases using a randomization scheme of the parameters of the surface. By replacing one or more parameters of the parametric surface with a random variable, we derive a new implied volatility surface with increased flexibility to fit the market. The additional flexibility stems from an induced mixing probability density, which is created by the stochastic nature of the randomized parameter. Mixing techniques have been studied in the past and have been shown to fit well to the market~\cite{rebonato2004unconstrained, bloch2011smiling, wilkens2005option}. The new randomized volatility surface is expressed in terms of the original parameters plus parameters to specify the chosen random variable. To show the effectiveness of the novel method, we apply the method to the existing parametrizations of the SABR parametrization and demonstrate that the additional flexibility enables them to better fit the market data for SPX options with expiry dates of less than half a year. Furthermore, we derive a second kind of randomization where we randomize the spot parameter of the asset. This arbitrage-free randomization is shown to be particularly effective for near-maturity options (i.e., options with an expiry date in the next few days) before an earning announcement, which induces a multi-modal-type volatility regime.\footnote{We speak of a multi-modal-type volatility regime when the risk-neutral probability density functions implied by the volatility surface exhibit more than one mode and are therefore \emph{multi-modal}.} The paper is organized in the following way:
\Cref{sec:model} describes the process of randomizing a parametrization of volatility surfaces and proves that the randomization is indeed free of static arbitrage. We present a few examples of randomized parametrizations and their effectiveness in \Cref{sec:numerics}. In particular, we show an improved fit to options on the S\&P 500 index. In \Cref{sec:rand-Spot}, we then introduce the randomized spot parametrization and also present an example of its use on short-maturity AMZM (Amazon.com Inc) options on the day of an earnings announcement. Finally, we conclude the paper in \Cref{sec:conclusion}.

%%%%%%%%%%%%%%%%%%%%%%%%%%%%%%%%%%%%%%%%%%%%%%%%%%%%%%%%%%%%
\section{Parameter Randomization and Analytical Implied Volatility}
\label{sec:model}
\subsection{General Volatility Parametrizations}
%%%%%%%%%%%%%%%%%%%%%%%%%%%%%%%%%%%%%%%%%%%%%%%%%%%%%%%%%%%
We consider a type of implied volatility surfaces, which are given in parametric form, defined by a set of parameters. As a starting point, suppose that $\overline{p} := (p_1,\dots,p_m)$ is an array of $m$ constant parameters on a parameter domain $\overline{p} \in \mathcal{D}\subset \R^m$. We define a \emph{parameterized implied volatility function} as a positive $\mathcal{C}^{1,2}$ function
\begin{equation}
    \hat\sigma(T,K;\overline{p}):=\hat\sigma(T,K;(p_1,\dots,p_m)) \geq 0,
\end{equation} of expiry and strike price $(T,K) \in \Pi = (t_0,\infty) \times [0,\infty)$ given the associated parameters $p_1,\dots,p_m$. The map $\hat\sigma(T,K;\overline{p})$ for a fixed $\overline{p}$ is referred to as a \emph{parameterized implied volatility surface}\footnote{Sometimes we refer to the function as \emph{parametric volatility surface} or simply \emph{parametrization}.} since the set \begin{equation}
    \left\{ \left( T,K,\hat\sigma(T,K;\overline{p})\right): (T,K) \in \Pi\right\} \subset \R^3,
\end{equation} defines a surface over the domain $\Pi$. The exact shape of the surface depends on the parameters $\overline{p}$ and the chosen parametrization function. 
Since the parameterized implied volatility function is positive, we can compose the function with the Black-Scholes formula and obtain the \emph{parametrized (put/call) pricing function}, defined as
 \[ V_{c/p}(T,K; \overline{p}) := BS_{c/p}(t_0,S_0,T,K;\hat\sigma(T,K; \overline{p})),\]
 for an asset with spot price $S_0$ at $t_0$. The function provides a continuum of European call and put option prices on the underlying asset on $(t_0 \times \infty) \times [0,\infty)$ given the implied volatility $\hat\sigma(T,K; \overline{p})$. We can extend the functions to the limit points at $T=t_0$, which, under certain conditions exist and are given by $V_c(t_0,K; \overline{p}) = (S_0-K)^+$ and $V_c(t_0,K; \overline{p}) = (K-S_0)^+$ (see \Cref{def:abitrage_free}). The pricing function defines two additional surfaces in $\R^3$ over the domain $\Pi$, which we refer to as the (put/call) pricing surface. 
 
 Since the parameterized implied volatility and pricing surface define a set of prices for tradable options, these prices must be free of arbitrage opportunities. There are various ways to define the absence of arbitrage of an implied volatility surface, most of which are equivalent. Here, we will utilize the \emph{model-free} definition~\cite{roper2009implied}, which does not rely on the introduction of stochastic asset models but can be expressed as a set of conditions on the pricing surfaces. 
\begin{dfn}[Arbitrage-free volatility surface]
\label{def:abitrage_free}
    Given a set of parameters $\overline{p}$, let $\hat\sigma(T,K; \overline{p})$ be a parametrized implied volatility surface defined on $\Pi = \Pi_T \times \Pi_K = (t_0,\infty)\times \R_+$. Let $V_c(T,K; \overline{p})$ be its call pricing function, which is extended to the limit points at $T=t_0$. The parametrization is called free of ``butterfly" arbitrage if the following conditions hold on the call pricing function and a constant $s > 0$:
\begin{enumerate}[i)]
    \itemsep0em
    \item $V_c(T, \cdot ;\overline{p} )$ is convex and non-increasing for all $T \in \Pi_T$.
     \item $\lim_{K \to \infty } V_c(T,K;\overline{p}) = 0 $ for all $T \in \Pi_T$.
     \item $(s - K)^+ \leq V_c(T,K;\overline{p}) \leq s$ for all $(T,K) \in \Pi$.
     \item $V_c(t_0,K;\overline{p}) =   (s - K)^+$ for all $K \in \Pi_K$.
\end{enumerate}
If the following additional condition holds, the pricing surface is also free of ``calendar" arbitrage, and we call the surface arbitrage-free:
\begin{enumerate}[i)]
\setcounter{enumi}{4}
\itemsep0em
         \item $V_c(\cdot,K;\overline{p})$ is non-decreasing for all $K \in \Pi_K$. 
\end{enumerate}
\end{dfn}
Equivalent conditions in terms of the put-call parity can be derived for the put pricing function $V_p(T,K;\overline{p})$, or directly in terms of the volatility surface $\hat \sigma (T,K;\overline{p})$~\cite{guo2016generalized}. Under these conditions, one can prove the existence of a non-negative local martingale process on a suitable probability space such that the call price function can be written as a risk-neutral expectation of the final payoff. In other words, an arbitrage-free model (price process) exists that yields the volatility surface. However, we will not utilize this fact to avoid introducing any model dynamics and remain ``model-free". 

Due to the continuity of the Black-Scholes formula with respect to the volatility input, and since we chose the volatility surface to be $\mathcal{C}^{1,2}$, it is necessary for the parameterized price function $V_{c/p}(T,K ; \overline{p})$ to be $\mathcal{C}^{1,2}$. Therefore, one can obtain the risk-neutral probability density functions $\{p_{S_t} \colon  t \geq t_0\}$, defined as\footnote{Note that one can also use the put prices $V_{p}(t,x; \overline{p})$ to obtain the same result.}
\begin{equation}
    \label{eq:breeden-liz}
    p_{S_t}(x) := \e^{r(t-t_0)} \frac{\d^2 V_{c}(t,x; \overline{p})}{\d x^2},
\end{equation} 
for any $t \geq t_0$ using the Breeden-Litzenberger formula~\cite{breeden1978prices} and risk-free rate $r$. In particular, one can show with a few manipulations that the expected value of the random variable $S_t$ whose PDF is given by is $p_{S_t}(\cdot)$, is
\begin{equation}
\label{eq:expectation_forward}
    \E[S_t] = S_0\e^{r(t-t_0)}.
\end{equation} Conversely, given a set of PDFs, such that (\ref{eq:expectation_forward}) holds for all $t \geq t_0$, we can define an arbitrage-free pricing surface
by integration of the probability densities~\cite{gatheral2014arbitrage}.

    If considering an implied volatility surface at a fixed expiry $T \in \Pi_T$, we refer to the resulting function $\hat\sigma_{T}(K; \overline{p}) := \hat\sigma(T,K;(p_1,\dots,p_m))$ as an \emph{implied volatility slice}.
Furthermore, we denote the collection of all possible volatility surfaces $\mathcal{S}$ for a parametrization as 
\begin{equation}
    \mathcal{S} := \left\{ \hat\sigma(T,K;\overline{p}) : \overline{p} \in \mathcal{D} \right\}.
\end{equation} 
To obtain a volatility surface that fits the market, we find the optimal parameters $\overline{p}$, which minimizes the difference between the market quotes
\begin{equation}
\label{eq:optimization}
    \overline{p}_{opt}  = \argmin_{ \hat\sigma(T,K;\overline{p}) \in \mathcal{S} } \sum_{i,j}^{N,M} \lVert{\hat\sigma(T_i,K_j;\overline{p}) - \sigma_{mkt}(T_i,K_j)}\rVert,
\end{equation}
with respect to a desired norm $\lVert.\rVert$. The ability to fit the market, therefore, directly depends on the ``size" of $\mathcal{S}$, whether $\mathcal{S}$ contains a function that can match the market.
%%%%%%%%%%%%%%%%%%%%%%%%%%%%%%%%%%%%%%%%%%%%%%%%%%%%%%%%%%%%
\subsection{Volatility Parametrizations with Random Parameters}
%%%%%%%%%%%%%%%%%%%%%%%%%%%%%%%%%%%%%%%%%%%%%%%%%%%%%%%%%%%%
 In an effort to increase the size of $\mathcal{S}$ and obtain better fitting arbitrage-free volatility surfaces, we consider the possibility of adding stochasticity to the implied volatility surface by replacing one of the parameters $p_i$ for $1\leq i\leq m$ with a random variable. Let $\hat\sigma(T,K;\overline{p})$ be a parameterized implied volatility function with parameter domain $\mathcal{D} \subset \R^m$ and suppose that $\vartheta$ is a real-valued random variable on a probability space $(\Omega,\F, \P)$. We can replace the $i$-th entry of $\overline{p}$ with the random variable $\vartheta$ to obtain the random vector $\overline{p}(\vartheta) = (p_1,p_2,\dots,\vartheta,\dots, p_m)$. If we assume that $\overline{p}(\vartheta)$ is almost surely in the domain $\mathcal{D}$, a realization $\omega \in \Omega$ determines a real-valued vector $\overline{p}(\theta):= (p_1,p_2,\dots,\theta,\dots, p_m) \in \R^m$ with $\theta = \vartheta(\omega)$, which almost surely provides an arbitrage-free implied volatility surface $\hat\sigma(T,K;\overline{p}(\theta))$ and its associated pricing functions $V_{c/p}(T,K; \overline{p}(\theta))$. 
 
Assuming measurability of the parametrization function in the parameters, the pricing function $V_{c/p}(T,K; \overline{p}(\vartheta))$ is thus a random variable, and we can compute the expectation as the Lebesgue-integral
\begin{equation}
\label{eq:randPrice}
\E[V_{c/p}(T,K; \overline{p}(\vartheta))]=\int BS_{c/p}(t_0,S_0,T,K;\hat\sigma(T,K;\overline{p}(\vartheta))) \d \P.
\end{equation} 
If $\vartheta$ is absolutely continuous, the integral is a Riemann integral
\begin{equation}
\E[V_{c/p}(T,K; \overline{p}(\vartheta))]=\int_{\R} BS_{c/p}(t_0,S_0,T,K;\hat\sigma(T,K;\overline{p}(\theta)) f_{\vartheta}(\theta)\d \theta, 
\end{equation}
where $f_{\vartheta}$ is the associated probability density function of $\vartheta$. The equation shows that the expected option price under the randomized volatility surface is effectively an average of option prices on the sample space of the parameters of the volatility surface. Since the expectation of the randomized pricing function is a deterministic function in two variables, we can investigate its suitability as a pricing function. Due to the linear properties of the integral operator, the function is indeed an arbitrage-free pricing function.
\begin{lem}[Arbitrage-free randomization]
\label{lem:arb-free}
Let $\hat\sigma(T,K; \overline{p})$ be an arbitrage-free parameterized implied volatility function on the parameter domain $\overline{p} \in \mathcal{D}$ and let $\vartheta$ be a real-valued random variable on a probability space $(\Omega,\F, \P)$ and denote $\mu(\cdot)$ as its law.

Suppose that $\overline{p}(\vartheta) = (p_1,p_2,\dots,\vartheta,\dots,p_m)$ is the random vector where we replaced the $i$-th parameter of $\overline{p}$ with $\vartheta$. Then, the map $\hat\sigma(T,K)$ such that
    \begin{equation}
    \label{rand:weak}
       BS_{c/p}(t_0,S_0,T,K;\hat\sigma(T,K))  = \E[V_{c/p}(T,K; \overline{p}(\vartheta))], \;\; \forall (T,K) \in \Pi,
    \end{equation}
    is an arbitrage-free implied volatility surface. We refer to $\hat\sigma(T,K)$ as a randomization of $\hat\sigma(T,K; \overline{p})$ with random variable $\vartheta$ in parameter $i$.
\end{lem}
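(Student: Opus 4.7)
The plan is to show that the mixed call price
\[
\tilde V_c(T,K) := \E[V_c(T,K;\overline p(\vartheta))]
\]
(and analogously $\tilde V_p$) itself satisfies conditions (i)--(v) of Definition~\ref{def:abitrage_free}, and then to invert the Black-Scholes formula in the volatility argument to recover $\hat\sigma(T,K)$. Since $\sigma\mapsto BS_c(t_0,S_0,T,K;\sigma)$ is continuous and strictly increasing on $[0,\infty)$ with range containing the open no-arbitrage cone, the defining relation $(\ref{rand:weak})$ uniquely fixes a non-negative $\hat\sigma(T,K)$ whose call pricing surface equals $\tilde V_c$ by construction. Hence the arbitrage-freeness of $\hat\sigma$ reduces to the arbitrage-freeness of $\tilde V_c$.

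The verification of (i)--(v) is the core step and relies on two ingredients: the hypothesis $\overline p(\vartheta)\in\mathcal D$ almost surely, which yields the pathwise arbitrage-freeness of $V_c(\cdot,\cdot;\overline p(\vartheta))$, and the fact that taking expectations preserves pointwise inequalities, convexity, and monotonicity. Conditions (iii) and (iv) transfer directly by linearity of expectation. For (i), monotonicity in $K$ follows by integrating the a.s.\ inequality $V_c(T,K_1;\overline p(\vartheta))\ge V_c(T,K_2;\overline p(\vartheta))$ for $K_1<K_2$, while convexity in $K$ follows by integrating the pathwise two-point convexity inequality in $\theta$. Condition (v) is analogous in the $T$-variable, and (ii) follows from a dominated-convergence argument with the integrable majorant $s$ supplied by (iii).

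The main obstacle will be the inversion step together with the regularity it requires. I need $\tilde V_c(T,K)$ to lie \emph{strictly} inside the no-arbitrage interval on the interior of $\Pi$ so that the Black-Scholes inversion yields a finite $\hat\sigma(T,K)$; this holds because each path respects the strict bounds on the interior, and a probability mixture of strictly-interior points remains strictly interior. Moreover, the $\mathcal C^{1,2}$ smoothness underlying Definition~\ref{def:abitrage_free} and formula $(\ref{eq:breeden-liz})$ must survive the mixing. This is handled by differentiation under the integral sign: the Black-Scholes price and its first two $K$-derivatives are locally bounded in $\sigma$ on compact parameter sets, so the pathwise regularity passes to $\tilde V_c$ through a standard dominated-convergence argument, and is then transferred to $\hat\sigma$ via the implicit-function theorem applied to the Black-Scholes formula in $\sigma$. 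Combining these ingredients yields the lemma.
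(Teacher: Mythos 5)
Your proposal is correct, and it follows the route the paper only gestures at before taking a slightly different one. For the calendar condition (v) you and the paper argue identically: the pathwise inequality in $T$ passes to the expectation. For butterfly arbitrage, however, the paper does not verify conditions (i)--(iv) one by one as you do; instead it interchanges the order of integration to write $\E[V_c(T,K;\overline p(\vartheta))]=\e^{-r(T-t_0)}\int_K^\infty (x-K)\,\overline f(x)\,\dx$ with $\overline f(x)=\int_\R p_{S_T;\theta}(x)f_\vartheta(\theta)\,\d\theta$, checks that the mixture $\overline f$ is a proper probability density, and invokes the Breeden--Litzenberger-type sufficiency result of~\cite{gatheral2014arbitrage}. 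Your direct verification is more elementary and needs less pathwise regularity (the density route implicitly requires each $V_c(T,\cdot;\overline p(\theta))$ to be twice differentiable so that $p_{S_T;\theta}$ exists, and strictly speaking also that the mixture remains centered at the forward, which the paper leaves implicit here but does verify in the later spot-randomization lemma); the paper's route, in exchange, immediately exhibits the risk-neutral density of the randomized surface as a mixture, which is the structural fact exploited throughout the rest of the paper. You also address two points the paper passes over silently: that the mixed price lies strictly inside the Black--Scholes invertibility range so that $\hat\sigma(T,K)$ in (\ref{rand:weak}) is well defined and finite, and that smoothness survives the mixing via differentiation under the integral sign. Both write-ups are at a comparable level of rigor, and yours contains no gap.
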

\begin{proof}
The conditions of an arbitrage-free surface are given by \Cref{def:abitrage_free}, and it is possible to prove the lemma by confirming that each condition is closed under convex combinations. We will show separately that the surface is free of butterfly arbitrage and free of calendar arbitrage. The first step is thus to confirm that for each fixed $T \in \Pi_T$, the volatility slice $\hat{\sigma}_T(K)$ is arbitrage-free. 
We recall that the call price can be written as an integration of the risk-neutral PDF $p_{S_T;\theta}$ 
    \[V_{c}(T,K; \overline{p}(\theta)) = \e^{-r (T-t_0)}\int_{K}^\infty (x - K) p_{S_T;\theta}(x)\d x.\]
    The expectation can thus be written as
    \begin{align*}
        \E[V_{c}(T,K; \overline{p}(\vartheta))] &= \int_{\R} V_{c}(T,K; \overline{p}(\theta)) \d \mu(\theta)\\
        &= \int_{\R} \e^{-r (T-t_0)}\int_{K}^\infty (x - K) p_{S_T; \theta}(x) \d x \d \mu(\theta) \\
        &= \e^{-r (T-t_0)}\int_{K}^\infty (x - K) \int_{\R} p_{S_T; \theta}(x) \d \mu(\theta) \d x . 
    \end{align*}
    Since the random vector $\overline{p}(\vartheta)$ takes values almost surely in $\mathcal{D}$, the function $p_{S_T; \vartheta}(\cdot)$ is almost surely a probability density function. The expression $\int_{\R} p_{S_T; \theta}(x) \d \mu(\theta)$ is a convex combination of probability densities (mixture density), which we define as the function 
    \[\overline{f}(x):= \int_{\R} p_{S_T; \theta}(x)\d \mu(\theta).\]
    We claim $\overline{f}(x)$ to be a proper probability density function on $[0,\infty)$. The function $\overline{f}(x)$ is certainly positive since $p_{S_T; \theta}(\cdot)$ is a positive function, and integrates to 1 since
    \begin{align*}
 \int_0^\infty \overline{f}(x) \d x &= \int_0^\infty\int_{\R} p_{S_T; \theta}(x) \d \mu(\theta)\d x \\
 &= \int_{\R} \int_0^\infty\ p_{S_T; \theta}(x) \d x \d \mu(\theta) \\
 &=  \int_{\R} \d \mu(\theta)\\
 &=1.
    \end{align*}
    This show the sufficient condition for $\hat\sigma_T(K)$ to be butterfly-arbitrage-free \cite{gatheral2014arbitrage}.
    It remains to show that condition v) is true as well. Let $K$ be fixed and suppose that $t_1 < t_2 \in \Pi_T$. Since $\hat\sigma(T,K; \overline{p}(\theta))$ is arbitrage-free, we have that $V_c(t_1,K;\overline{p}(\theta)) \leq V_c(t_2,K; \overline{p}(\theta))$ for $\overline{p}(\theta) \in \mathcal{D}$. Since this is almost surely the case, we conclude that
    \[\E[V_c(t_1,K;\overline{p}(\vartheta))] \leq \E[V_c(t_2,K;\overline{p}(\vartheta))],\]
    and therefore, that condition v) is fulfilled, too.
\end{proof}
The map $\hat\sigma(T,K)$ as defined \Cref{rand:weak} is thus an arbitrage-free implied volatility surface, and the pricing of European-style options will collapse to determining a mixture distribution weighted with the probability distribution of $\vartheta$. If we specify the probability distribution of the random variable $\vartheta$ in parametric form, we can combine the sets of the parametrized surface and the random variable into a common set of parameters. Suppose that the random variable $\vartheta$ is given in parametric form by the parameters $\overline{q} = (q_1,q_2,\dots,q_l) \in \R^l$. We combine the parameter sets of $\overline{q}$ and $\overline{p}$ to an extended parameter parameter vector \begin{equation}
    \overline p^* = (p_1,p_2,\dots,p_{i-1},p_{i+1}, \dots,p_m, q_1, \dots,q_l) \in \mathcal{D}^*,
\end{equation} 
where $\mathcal{D}^*$ is a new parameter space. The map $\hat\sigma(T,K) := \hat\sigma(T,K;\overline p^* )$ therefore regains a parameterized form. 
\begin{example}
    We clarify the notation of the extended parameter space $\mathcal{D}^*$ on a simple example of a randomization. Suppose that $\hat\sigma(T,K,(a,b,c))$ is a parametric implied volatility surface with domain $(a,b,c) \in \mathcal{D} = \R^3$. We consider a randomization of $\hat\sigma(T,K,(a,b,c))$ in parameter $c$ with the normally distributed random variable $\vartheta \sim \mathcal{N}(\mu,\nu)$ for $(q_1,q_2) = (\mu,\nu) \in \R \times \R_+$. This randomization is thus given in parametric form by the function $\hat\sigma(T,K,\overline{p}^*)$ for the parameters 
    \begin{equation*}
        \overline{p}^* = (a,b,\mu,\nu) \in \mathcal{D}^* = \R^3 \times \R_+.
    \end{equation*}
The definition of $\hat\sigma(T,K,\overline{p}^*)$ is given by (\ref{rand:weak}).
\end{example}
Since we defined $\hat\sigma(T,K,\overline{p}^*)$ in terms of the pricing surface and the Black-Scholes formula has no analytical inverse in the volatility parameter, we no longer have an analytical expression of the volatility surface. However, it is possible to derive an analytic expansion of the implied volatility surface, which we aim to derive step by step for the remainder of the section.

As the pricing surface of the randomized parametrization $V_{c/p}(T,K;\overline{p}^*)$ is defined as an integral over the domain of parameters, weighted by the law of the random variable $\vartheta$, we now derive a discretization of the Lebesgue integral using a numerical integration scheme. While the discretization acts an approximation of the distribution of $\vartheta$ and the pricing surface, we will ensure that the discretization remains arbitrage-free \emph{for any degree of accuracy of the approximation}. This is crucial as the approximation will be the primary pricing surface parametrization to fit to the market, and the degree of approximation will act as an additional parameter to the surface. To ensure this property we choose a numerical integration scheme that maintains the convex property of the summation. The Gaussian quadrature is a suitable tool for such an approximation. 

Suppose we wish to integrate an expression $\int_a^b h(x) w(x) \d x$, on a real domain $[a,b] \subset \R$, for an integrable function $h(x)$ and a weight function $w(x)$ (a positive function which integrates to 1). The Gaussian quadrature approximates this as
\begin{equation}
    \int_a^b h(x) w(x) \d x \approx \sum_{n=1}^{N_q} \lambda_n h(x_n),
\end{equation}
 where $\{\lambda_n,\theta_n\}_{n=1}^{N_q}$ are known as the Gauss-quadrature weights. The important property of the Gauss-quadrature integration is that the integral is exact for any polynomial of degree less than $2N_q$. Since the expression $h(x)=1$ is a polynomial of degree $0$, the approximation is exact for this integral for any $N_q$\footnote{Other integration methods do not have this property. For instance, a trapezoidal method with only two grid points will yield $0$ if $w(x)$ has finite support.}, and we obtain
 \begin{equation}
 \label{eq:approx_weight_function}
    \int_a^b 1 w(x) \d x  = \int_a^b w(x) \d x =  \sum_{n=1}^{N_q} \lambda_n =1.
\end{equation}
In particular, this means that for an absolutely continuous real-valued random variable $X$, where the probability density function $w(x) = f_X(x)$ is a weight function, we can approximate its expectation as 
\begin{equation}
\E[g(X)] = \int_{\R} g(x) f_X(x) \dx \approx\sum_{n=1}^{N_q}\lambda_n g(x_n)
\end{equation}
where the pairs $\{\lambda_n,x_n\}_{n=1}^{N_q}$ are the Gauss-quadrature weights and nodes. The Gauss quadrature works more generally, as outlined in \ref{sec:appendix_qp}, to approximate the expectation $\E[g(X)]$ for any real-valued random variable. Applied to the randomized surface, we find
\begin{equation}
\label{eqn:genericPricing}
V_{c/p}(T,K;\overline{p}^*) =  \int_{\R} V_{c/p}(T,K;\overline{p}(\theta)) \d \mu(\theta) \approx\sum_{n=1}^{N_q}\lambda_n V_{c/p}(T,K;\overline{p}(\theta_n)),
\end{equation}
where the pairs $\{\lambda_n,\theta_n\}_{n=1}^{N_q}$ are the Gauss-quadrature weights and nodes depend on the distribution function $F_{\vartheta}(\cdot)$ of $\vartheta$ (see \ref{sec:appendix_qp}). We define the \emph{truncated}, or \emph{discretized pricing surface} with $N_q$ terms as the sum
\begin{equation}
    \label{eq:model_with_extra_p}
    V_{c/p}(T,K; \overline p^*, N_q) :=  \sum_{n=1}^{N_q}\lambda_n V_{c/p}(T,K;\overline{p}(\theta_n)).
\end{equation}
Since the discretization approximates $V_{c/p}(T,K; \overline p^*, N_q)$, and due to the convex property, it is arbitrage-free for any $N_q$.
\begin{lem}
    Let $V_{c/p}(T,K;\overline{p}^*)$ be a randomized pricing surface for a parametrization $\hat\sigma(T,K;\overline{p})$ with random variable $\vartheta$ in a parameter $i \leq m$. Let $V_{c/p}(T,K; \overline p^*, N_q)$ be the discretized pricing surface of the randomization. Then, there is a discrete random variable $\bar{\vartheta}$, which defines a randomization of $\hat\sigma(T,K;\overline{p})$ in parameter $i$ with the property that
    \begin{equation}
    \label{eq:equivalence_disc_cont}
    \E[V_{c/p}(T,K; \overline{p}(\bar\vartheta))] \equiv V_{c/p}(T,K; \overline p^*, N_q).\end{equation}
    This implies in particular that $V_{c/p}(T,K; \overline p^*, N_q)$ is an arbitrage-free pricing surface.
\end{lem}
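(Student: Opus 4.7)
The plan is to construct the claimed discrete random variable $\bar\vartheta$ explicitly from the Gauss--quadrature rule and then read off the identity (\ref{eq:equivalence_disc_cont}) from the definition (\ref{eq:model_with_extra_p}). Specifically, for the nodes and weights $\{\lambda_n,\theta_n\}_{n=1}^{N_q}$ associated with the weight function $w(x)=f_\vartheta(x)$, I would define
\begin{equation*}
\P(\bar\vartheta=\theta_n)=\lambda_n,\qquad n=1,\dots,N_q,
\end{equation*}
and verify that this is a bona-fide probability law. The non-negativity $\lambda_n>0$ is a standard property of Gauss quadrature with a positive weight function (the quadrature weights can be written as integrals of squared Lagrange basis polynomials against $w$), and $\sum_n\lambda_n=1$ is exactly the identity (\ref{eq:approx_weight_function}) applied to $w=f_\vartheta$. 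Moreover, the nodes $\theta_n$ lie in the support of $f_\vartheta$, so that $\overline{p}(\theta_n)\in\mathcal{D}$ whenever $\overline{p}(\vartheta)\in\mathcal{D}$ almost surely, which was the standing assumption on the original randomization.

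Once $\bar\vartheta$ is in hand, the identity (\ref{eq:equivalence_disc_cont}) is essentially automatic: by the definition of expectation for a discrete random variable,
\begin{equation*}
\E[V_{c/p}(T,K;\overline{p}(\bar\vartheta))]=\sum_{n=1}^{N_q}\lambda_n\,V_{c/p}(T,K;\overline{p}(\theta_n))=V_{c/p}(T,K;\overline p^*,N_q),
\end{equation*}
where the last equality is precisely (\ref{eq:model_with_extra_p}). Consequently, the discretized pricing surface is a genuine expected price under the randomization induced by $\bar\vartheta$, and arbitrage-freeness follows by applying the reasoning of Lemma~\ref{lem:arb-free} to $\bar\vartheta$.

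The only real obstacle is that Lemma~\ref{lem:arb-free} is stated for an absolutely continuous randomizer, whereas $\bar\vartheta$ is discrete, so I would include a brief remark that the proof of that lemma carries over verbatim after replacing $\int_{\R}(\cdot)f_\vartheta(\theta)\,\d\theta$ with $\sum_{n=1}^{N_q}\lambda_n(\cdot)$. Both ingredients used there, namely (i) the fact that a convex combination $\overline f(x)=\sum_n\lambda_n p_{S_T;\theta_n}(x)$ of valid risk-neutral densities is again a valid density (positivity and integration to one are immediate from $\lambda_n>0$ and $\sum_n\lambda_n=1$), which delivers butterfly-freeness per Gatheral--Jacquier, and (ii) the preservation of the pointwise monotonicity in $T$ of each slice $V_c(\cdot,K;\overline{p}(\theta_n))$ under non-negative convex combinations, which delivers calendar-freeness, depend only on the measure being a probability measure with non-negative mass, not on absolute continuity. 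Alternatively, one can avoid any re-derivation by approximating $\bar\vartheta$ weakly by absolutely continuous random variables (e.g., narrow Gaussian mollifications of the atoms) and passing to the limit in (\ref{rand:weak}) using continuity of the Black--Scholes price in its arguments; both routes conclude that $V_{c/p}(T,K;\overline p^*,N_q)$ satisfies Definition~\ref{def:abitrage_free}.
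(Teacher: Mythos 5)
Your construction is essentially the paper's own proof: you define the discrete law $\P(\bar\vartheta=\theta_n)=\lambda_n$, verify non-negativity and normalization via the exactness of the quadrature rule for constants, and read off the identity from the definition of the discretized surface (the paper invokes Skorohod's representation theorem for the existence of $\bar\vartheta$, but this is cosmetic). You are in fact slightly more careful than the paper, which silently applies Lemma~\ref{lem:arb-free} (stated for absolutely continuous $\vartheta$) to the discrete $\bar\vartheta$; your observation that the argument depends only on $\{\lambda_n\}$ being non-negative and summing to one closes that small gap.
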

\begin{proof}
    Let $\{\lambda_n,\theta_n\}_{n=1}^{N_q}$ be the Gauss-quadrature pairs of $\vartheta$ of degree $N_q$ and define the function
    \[ f_{\bar{\vartheta}}(x) = \begin{cases}
        \lambda_n, \quad &\text{ if } x = \theta_n,\\
        0 \quad &\text{ otherwise.}
    \end{cases}\]
    We claim that this is a probability mass function. Since $\lambda_n \geq 0$ for all $n \leq N_q$, the function is certainly positive. Furthermore, since quadrature integration is exact for polynomials up to $2N_q$ degree, we have as in (\ref{eq:approx_weight_function}), that $\sum_{n=1}^{N_q} \lambda_n =1$.
    By Skorohod's representation theorem, there exists a random variable $\bar\vartheta$, such that $f_{\bar{\vartheta}}$ is its probability mass function. 
    It follows that (\ref{eq:equivalence_disc_cont}) is true, since $\E[V_{c/p}(T,K;\overline{p}(\bar\vartheta))]$ is given by $\sum_{n=1}^{N_q}\lambda_n V_{c/p}(T,K; \overline{p}(\theta_n))$. 
\end{proof}
\subsection{Analytic Expansion of Randomized Volatility Surface}
 With the additional parameter and the help of the quadrature integration, we transformed the semi-analytical pricing surface into a finite sum of prices given by the discretized randomization of (\ref{eq:model_with_extra_p}). Since the Black-Scholes equation has no analytical inverse in the volatility argument, the implied volatility surface $\hat\sigma(T,K;\overline{p}^*,N_q)$ cannot be obtained explicitly. Nevertheless, Brigo and Mercurio ~\cite{brigo2002lognormal} showed how to get a polynomial expansion of the implied volatility function based on the Taylor expansion under a mixture model of lognormal prices. We derive a generalization of this result, and we use it to obtain an analytic expression of $\hat\sigma(T,K;\overline{p}^*,N_q)$. Let $(T,K)\in \Pi$ be fixed and suppose we want to find the value of $\hat\sigma(T,K;\overline{p}^*,N_q)$. We first define the function $m(T,K)$ as
\begin{equation}
    m(T,K) := \log \frac{S_0}{K} + rT,
\end{equation}
where $r$ is the interest rate. Furthermore, let $P(m)\colon \R \to \R_+$ a positive, continuously differentiable function of one variable $m$, for which we define the value at $m= m(T,K)$ as $P(m(T,K)) = \hat\sigma(T,K;\overline{p}^*,N_q)$. The Black-Scholes price of an option at $(T,K)$ with implied volatility $P(m(T,K))$ can be written as a function $f$, such that
\begin{align}
\label{eq:left-side}
    f(m,P(m)):= S_0 \bigg[\Phi\bigg(\frac{m+\frac12 P^2(m) T}{P(m)\sqrt{T}}\bigg)-\e^{-m}\Phi\bigg(\frac{m-\frac12P^2(m) T}{P(m)\sqrt{T}}\bigg)\bigg].
\end{align}
On the other hand, a discretized price surface $V_{c/p}(T,K;\overline{p}^*,N_q)$ with $\eta_n = \hat\sigma(T,K,\overline{p}(\theta_n))$ can be written as a function of $m$ as well. We define:
\begin{equation}
\label{eq:right-side}
    g(m) := S_0 \sum_{n=1}^{N_q}\lambda_n \bigg[\Phi\bigg(\frac{m+\frac12\eta_n^2 T}{\eta_n\sqrt{T}}\bigg)-\e^{-m}\Phi\bigg(\frac{m-\frac12\eta_n^2 T}{\eta_n\sqrt{T}}\bigg)\bigg].
\end{equation}
 If we equate the two equations, the equation
\begin{equation}
\label{eq: SpecialEq0}
    f(m,P(m)) = g(m),
\end{equation}
is exactly equal to (\ref{eq:model_with_extra_p}) at $m = m(T,K)$. The goal is to obtain a polynomial expansion of the function $P(m)$, which we can then evaluate at $m = m(T,K)$ to obtain the randomized volatility function at $(T,K)$. The expansion utilizes the ideas from the implicit function theorem~\cite{krantz2002implicit} to obtain any higher-order derivative from an implicit function, such as \Cref{eq: SpecialEq0}. We can obtain the derivatives of $P(m)$ by differentiating both sides with respect to $m$ and write the function $P(m)$ as a Taylor expansion in terms of the derivatives at $0$ and the function $P(m)$ at $m=0$. Prior to stating the full expansion, we show a property of the implicit equation, which simplifies the expansion terms. 
\begin{lem}
\label{lem:der_vanish}
    Let the functions $f$ and $g$ be defined as \Cref{eq:left-side} and \Cref{eq:right-side}. The set $S = \{(x,y) \in \R^2 : f(x,y) = g(x)\}$ is symmetric over the $y$-axis.
\end{lem}
\begin{proof}
    We will show directly that if $(x,y) \in S$, then the point $(-x,y)$ is also contained in $S$. Assuming $(x,y) \in S$, we have
    \[\Phi\bigg(\frac{x+\frac12 y T}{y\sqrt{T}}\bigg)-\e^{-x}\Phi\bigg(\frac{x-\frac12y T}{y\sqrt{T}}\bigg)= \sum_{n=1}^{N_q}\lambda_n \bigg[\Phi\bigg(\frac{x+\frac12\eta_n^2 T}{\eta_n\sqrt{T}}\bigg)-\e^{-x}\Phi\bigg(\frac{x-\frac12\eta_n^2 T}{\eta_n\sqrt{T}}\bigg)\bigg] .\]
    Multiplying both sides with $e^{x}$ and adding a term $1 - e^{x}$ on both sides we find
    \[1 - \e^{x} + \e^{x}\Phi\bigg(\frac{x+\frac12 y T}{y\sqrt{T}}\bigg)-\Phi\bigg(\frac{x-\frac12y T}{y\sqrt{T}}\bigg)= 1 - \e^{x}  + \sum_{n=1}^{N_q}\lambda_n \bigg[ \e^{x} \Phi\bigg(\frac{x+\frac12\eta_n^2 T}{\eta_n\sqrt{T}}\bigg)-\Phi\bigg(\frac{x-\frac12\eta_n^2 T}{\eta_n\sqrt{T}}\bigg)\bigg] .\]
    On the other hand, for $(-x,y)$, we have
    \begin{align*}
        \frac{1}{S_0}f(-x,y) &= \Phi\bigg(\frac{-x+\frac12 y T}{y\sqrt{T}}\bigg)-\e^{x}\Phi\bigg(\frac{-x-\frac12y T}{y\sqrt{T}}\bigg) \\
        &= 1- e^{x}  - \Phi\bigg(\frac{x-\frac12 y T}{y\sqrt{T}}\bigg) +\e^{x} \Phi\bigg(\frac{x+\frac12y T}{y\sqrt{T}}\bigg) 
    \end{align*}
    as well as
    \begin{align*}
        \frac{1}{S_0}g(-x) &=
    \sum_{n=1}^{N_q}\lambda_n \bigg[\Phi\bigg(\frac{-x+\frac12\eta_n^2 T}{\eta_n\sqrt{T}}\bigg)-\e^{x}\Phi\bigg(\frac{-x-\frac12\eta_n^2 T}{\eta_n\sqrt{T}}\bigg)\bigg]  \\
    &= 1 - e^{x} + \sum_{n=1}^{N_q}\lambda_n \bigg[- \Phi\bigg(\frac{x-\frac12\eta_n^2 T}{\eta_n\sqrt{T}}\bigg) +\e^{x} \Phi\bigg(\frac{x+\frac12\eta_n^2 T}{\eta_n\sqrt{T}}\bigg)\bigg]
    \end{align*}
    Equating the two terms we obtain that
    \[\frac{1}{S_0}f(-x,y) = \frac{1}{S_0}g(-x)   \iff f(-x,y) = g(-x),\]
    and we therefore conclude that $(-x,y) \in S$.
\end{proof}
From the lemma we make the useful conclusion that implicit function $P(m)$ is even, and has therefore vanishing odd expansion coefficients.
\begin{theorem}
\label{thm:expansion}
Let $\hat\sigma(T,K; \overline{p})$ be an implied volatility parametrization on the parameter domain $\mathcal{D}$, which is continuously differentiable in $T$ and $K$, and let $\hat\sigma(T,K; \overline p^*, N_q)$ with $\overline p^* \in \mathcal{D}^*$ be its discretized randomization with the quadrature points $\{\lambda_n, \theta_n\}_{n=1}^{N_q}$. Let \[m(T,K) := \log \frac{S_0}{K} + rT,\] be the log-moneyness with interest-rate $r$. 
Then, the discretized randomization $\hat\sigma(T,K; \overline{p}^*, N_q)$ at $(T,K)$ is given by the Taylor expansion $P_{(T,K)}(m)\colon A \subset \R \to \R$, evaluated at $m= m(T,K)$, where
\begin{equation}
\label{eq:approx-taylor}
     P_{(T,K)}\left(m\right) =  P_{(T,K)}(0)+\frac{P_{(T,K)}^{(2)}(0)}{2!}m^2+\frac{P_{(T,K)}^{(4)}(0)}{4!}m^4+\frac{P_{(T,K)}^{(6)}(0)}{6!}m^6 + \mathcal{O}(m^8), 
\end{equation}
such that the expansion terms are given by
\begin{equation*}
   P_{(T,K)}(0)=\frac{2}{\sqrt T}\Phi^{-1}\Bigg(\sum_{n=1}^{N_q}\lambda_n \Phi\bigg(\frac12\eta_n \sqrt{T}\bigg)\Bigg),
\end{equation*}
\begin{equation*}
    P_{(T,K)}^{(2)}(0)=\frac{1}{2\sqrt{T}}\bigg\{-\frac{1}{\Sigma_0}+\sum_{n=1}^{N_q} \lambda_n \frac{E_n}{H_n}\bigg\},
\end{equation*}
\begin{equation*}
\begin{aligned}
    P_{(T,K)}^{(4)}(0)=&\frac{1}{8\sqrt T}\Bigg\{\frac{1+6\Sigma_2+\Sigma_0^2\big(-7-6\Sigma_2+3\Sigma_2^2\big)}{\Sigma_0^3}+ \sum_{n=1}^{N_q} \lambda_n\bigg[\frac{E_n}{H_n^3}\big(-1+7H_n^2\big)\bigg]\Bigg\},
\end{aligned}
\end{equation*}
\begin{equation*}
\begin{aligned}
    P_{(T,K)}^{(6)}(0)=&\frac{1}{32\sqrt T}\Bigg\{\frac{-3-45\Sigma_2+\Sigma_0^2\big(90\Sigma_2+60\Sigma_4\big)+\Sigma_0^4\Sigma_2\big(45\Sigma_2+60\Sigma_4-15\Sigma_2^2\big)}{\Sigma_0^5}\\
    &\qquad\quad+\frac{16\Sigma_0^2-90\Sigma_2^2-31\Sigma_0^4-45\Sigma_0^2\Sigma_2^2-\Sigma_0^4\big(15\Sigma_2+60\Sigma_4\big)+15\Sigma_0^2\Sigma_2^3}{\Sigma_0^5}\\
    &\qquad\quad+\sum_{n=1}^{N_q}\lambda_n\bigg[\frac{E_n}{H_n^5}\big(3-16H_n^2+31H_n^4\big)\bigg]\Bigg\},
\end{aligned}
\end{equation*}
with the auxiliary quantities:
\begin{equation*}
\begin{aligned}
    &\Sigma_0:=\frac{1}{2}P_{(T,K)}(0)\sqrt{T},\qquad\Sigma_2:=P_{(T,K)}(0)P_{(T,K)}^{(2)}(0)T,\qquad\Sigma_4:=P_{(T,K)}(0)P_{(T,K)}^{(4)}(0)T,\\
    & H_n:=\frac{1}{2}\eta_n\sqrt{T},\qquad E_n:=\exp{\Big(\frac{1}{2}(\Sigma^2_0-H^2_n)\Big)}, \qquad \eta_n = \hat\sigma(T,K; \overline{p}(\theta_n)).
\end{aligned}
\end{equation*}
The expression $\mathcal{O}(m^8)$ is the usual Big-O notation as $m \to 0$ and the subset $A$ is the radius of convergence of the expansion.  This means that for the remainder $R(m)$ we have $\lim_{m\to 0} \frac{R(m)}{m^8} \leq C$ for some constant $C$. 
\end{theorem}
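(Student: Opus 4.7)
The plan is to treat equation~(\ref{eq: SpecialEq0}), $f(m,P(m))=g(m)$, as an implicit definition of the smooth function $m\mapsto P_{(T,K)}(m)$ in a neighborhood of $m=0$, with the constituent volatilities $\eta_n=\hat\sigma(T,K;\overline{p}(\theta_n))$ frozen at their values at the target $(T,K)$, and then to extract its Taylor coefficients at the origin by iterated implicit differentiation. The partial $\partial_P f(0,P(0))$ is the at-the-money Black--Scholes Vega, which is strictly positive for $P(0)>0$ and $T>0$, so the implicit function theorem yields a $C^\infty$ solution $P_{(T,K)}$ near $m=0$; Taylor's theorem on a $C^8$ function then supplies the $\mathcal{O}(m^8)$ remainder automatically.

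Evaluating at $m=0$ is the first step. Using $\Phi(x)-\Phi(-x)=2\Phi(x)-1$ together with $\sum_{n=1}^{N_q}\lambda_n=1$ (see~(\ref{eq:approx_weight_function})), the equation $f(0,P(0))=g(0)$ collapses to
\begin{equation*}
\Phi\!\bigl(\tfrac12 P_{(T,K)}(0)\sqrt{T}\bigr) \;=\; \sum_{n=1}^{N_q}\lambda_n\,\Phi\!\bigl(\tfrac12 \eta_n\sqrt{T}\bigr),
\end{equation*}
which inverts directly to the stated closed-form expression for $P_{(T,K)}(0)$.

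The key structural fact that eliminates all odd-order coefficients is the log-forward-moneyness symmetry enjoyed by any convex combination of Black--Scholes prices. For each fixed $\sigma$, the call price satisfies the put--call symmetry $f(m,\sigma)=e^{-m}p(-m,\sigma)$; convex combination across the $\eta_n$ preserves this identity, and together with put--call parity one deduces $g(m)=f(m,P_{(T,K)}(-m))$. Strict monotonicity of $f$ in its volatility argument then forces $P_{(T,K)}(m)=P_{(T,K)}(-m)$, so all odd derivatives vanish at $0$ and only the even coefficients $P^{(2)}(0),P^{(4)}(0),P^{(6)}(0)$ appear in~(\ref{eq:approx-taylor}). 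I would obtain these by differentiating $f(m,P(m))=g(m)$ two, four and six times via the Fa\`a di Bruno formula, setting $m=0$, and solving the resulting triangular linear system. The simplifying identity $\varphi(d_1)=e^{-m}\varphi(d_2)$, a direct consequence of $d_1^2-d_2^2=2m$, produces extensive cancellations; the abbreviations $\Sigma_0,\Sigma_2,\Sigma_4$ collect the normalized products $\tfrac12 P(0)\sqrt{T}$, $P(0)P^{(2)}(0)T$ and $P(0)P^{(4)}(0)T$ from the left-hand side, while $H_n=\tfrac12\eta_n\sqrt{T}$ and $E_n=\exp\!\bigl(\tfrac12(\Sigma_0^2-H_n^2)\bigr)$ arise on the right-hand side from the derivatives of the $n$-th summand of $g$ combined with the common Gaussian density ratio.

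The main obstacle is combinatorial rather than conceptual: the sixth-order derivative of the composition of $\Phi$, $\exp$ and polynomials in $m$ and $P(m)$ generates a large number of terms, and matching them to the compact closed form displayed in the statement demands careful algebraic bookkeeping. A cleaner practical route is to expand $f$ and $g$ as bivariate Taylor series in $m$ about $m=0$, substitute the ansatz $P(m)=P(0)+\tfrac12 P^{(2)}(0)m^2+\tfrac{1}{24}P^{(4)}(0)m^4+\tfrac{1}{720}P^{(6)}(0)m^6+\cdots$, and equate coefficients of $m^{2k}$ for $k=1,2,3$. Each coefficient can then be solved explicitly from the resulting triangular system, and the entire argument reduces to the Brigo--Mercurio~\cite{brigo2002lognormal} derivation in the special case of a single lognormal mixture with $\eta_n$ genuinely constant in $K$, which this theorem generalizes.
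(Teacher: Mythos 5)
Your proposal follows the paper's core strategy---implicit differentiation of $f(m,P_{(T,K)}(m))=g(m)$ about $m=0$, with the constituent volatilities $\eta_n$ frozen at the target $(T,K)$, and direct evaluation at $m=0$ for the zeroth-order term---so the skeleton is the same. However, you supply two ingredients the paper only asserts or omits. First, you justify the applicability of implicit differentiation by noting that $f_y(0,P_{(T,K)}(0))$ is the at-the-money Black--Scholes Vega and hence strictly positive, whereas the paper merely restricts to the set $D_0$ where $f_y\neq 0$ without checking that $m=0$ lies in it. Second, and more substantively, your symmetry argument for the evenness of $P_{(T,K)}$ is a genuine addition: the identity $f(m,\sigma)=\e^{-m}p(-m,\sigma)$ holds for each fixed $\sigma$, the correction term in put--call parity is $\sigma$-independent and therefore survives the convex combination, and strict monotonicity of $f$ in its volatility argument then forces $P_{(T,K)}(-m)=P_{(T,K)}(m)$, killing all odd Taylor coefficients. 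The paper's proof asserts only that certain derivatives ``disappear'' without argument, so your lemma closes a real gap in the justification of why (\ref{eq:approx-taylor}) contains only even powers. Like the paper, you do not carry out the fourth- and sixth-order differentiations explicitly, so the displayed closed forms for $P_{(T,K)}^{(4)}(0)$ and $P_{(T,K)}^{(6)}(0)$ remain unverified in both treatments; your suggested coefficient-matching of bivariate Taylor expansions is an equivalent and arguably more systematic route to the same triangular system.
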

\begin{proof}
Let $(T,K) \in \Pi$ be fixed. We consider the following equation:
\begin{equation}
\label{eq: GeneralEq0}
    f(m,P_{(T,K)}(m))=g(m),
\end{equation}
with $f,g$ given as above, and $P_{(T,K)}(m)$ a continuously differentiable function, such that $P_{(T,K)}(m(T,K)) = \hat\sigma(T,K; \overline p^*, N_q)$. The values of $P_{(T,K)}$ are defined through the functional equation. Since the inverse of $f(\cdot)$ in the second variable is unknown, it is not possible to explicitly express $P_{(T,K)}$ in analytical form, but we apply the technique of \emph{implicit differentiation} to derive a polynomial expansion for it. Differentiating both sides of the \Cref{eq: GeneralEq0}, we have
\begin{equation*}
    f_x(m,P_{(T,K)}(m)) + f_y(m,P_{(T,K)}(m)) P_{(T,K)}'(m) = g'(m),
\end{equation*}
where $f_x,f_y$ are the partial derivatives of $f$ with respect to the first and second input variable. In the domain $D_0:=\{m\in \R|f_y(m,P_{(T,K)}(m))\neq 0\}$ we can write:
\begin{equation*}
    P_{(T,K)}'(m) = \frac{g'(m) - f_x(m,P_{(T,K)}(m))}{f_y(m,P_{(T,K)}(m)) }, \qquad m \in D_0.
\end{equation*}
We can now obtain any order derivative of $P_{(T,K)}(m)$ by differentiating both sides by $m$ and express the $n$-order derivatives of $P_{(T,K)}(m)$ as partial derivatives of $f$ of at most $n$ degrees. Note, however, that the number of derivatives on the right-hand side grows exponentially\footnote{More precisely, the growth is exponentially bounded from above (by $2^{i-1}$ with $ i>0$ the order of differentiation). It is possible to reorder and collect the terms in such a way the overall number is slightly lower, but in any case, the growth is ``more than" polynomial.} in the order of the target derivative (due to the \emph{formula of the derivation of a product}). As we show in \Cref{lem:der_vanish}, the function $P_{(T,K)}$ is even, and therefore, all odd-order derivatives vanish, simplifying the calculation significantly. 

Finally, we obtain the expansion by combining the terms and evaluating $P_{(T,K)}(m)$ at $m=0$:
\[f\left(0,P_{(T,K)}(0)\right) = S_0 \bigg[\Phi\bigg(\frac12P_{(T,K)}(0) \sqrt{T}\bigg)-\Phi\bigg(- \frac12P_{(T,K)}(0) \sqrt{T}\bigg)\bigg] ,\]
and
\[    g(0) = S_0 \sum_{n=1}^{N_q}\lambda_n \bigg[\Phi\bigg(\frac12\eta_n \sqrt T \bigg)-\Phi\bigg(-\frac12\eta_n \sqrt T\bigg)\bigg],\]
from which we obtain 
\begin{equation*}
   P_{(T,K)}(0)=\frac{2}{\sqrt T}\Phi^{-1}\Bigg(\sum_{n=1}^{N_q}\lambda_n \Phi\bigg(\frac12\eta_n \sqrt{T}\bigg)\Bigg).
\end{equation*}
Combining the results, we can express the function $P_{(T,K)}(m)$ as its Taylor expansion function around $0$, which yields (\ref{eq:approx-taylor}), and obtain the value for $\hat\sigma(T,K; \overline p^*, N_q)$ by evaluating at $m(T,K)$. 
\end{proof}
\begin{figure}[H]
    \centering
    \includegraphics[width=0.7\linewidth]{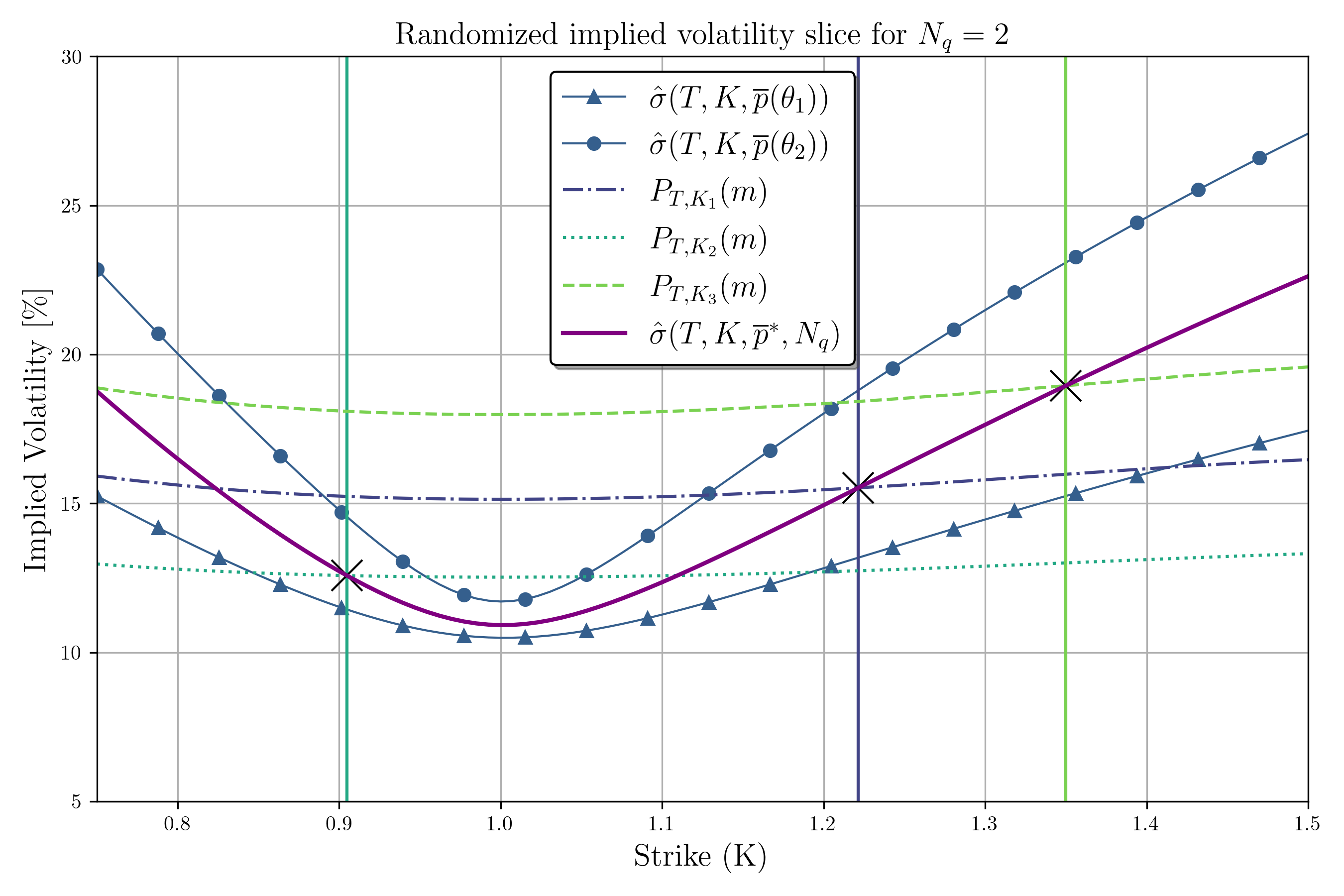}
    \caption{The figure shows the randomization of a parametrization with $N_q=2$ for a fixed time $T$. The grey plots show the individual slices that are mixed, and the green lines show the function $P_{(T,K)}$ for three different $K$s. For each $P_{(T,K)}$, the randomized volatility meets the expansion $P_{(T,K_i)}$ exactly at $K_i$.}
    \label{fig:expansion}
\end{figure}
The expansion formula completes the process of obtaining the randomized surface from a regular parametrization of a volatility surface. A graphical representation of the expansions is shown in \Cref{fig:expansion}. We also summarize the steps of the randomization in \Cref{fig:flow}, which provides an overview of the entire process from the initial parametrization to the randomized parametrization.
\begin{figure}[H]
    \centering
    \includegraphics[width=0.9\linewidth]{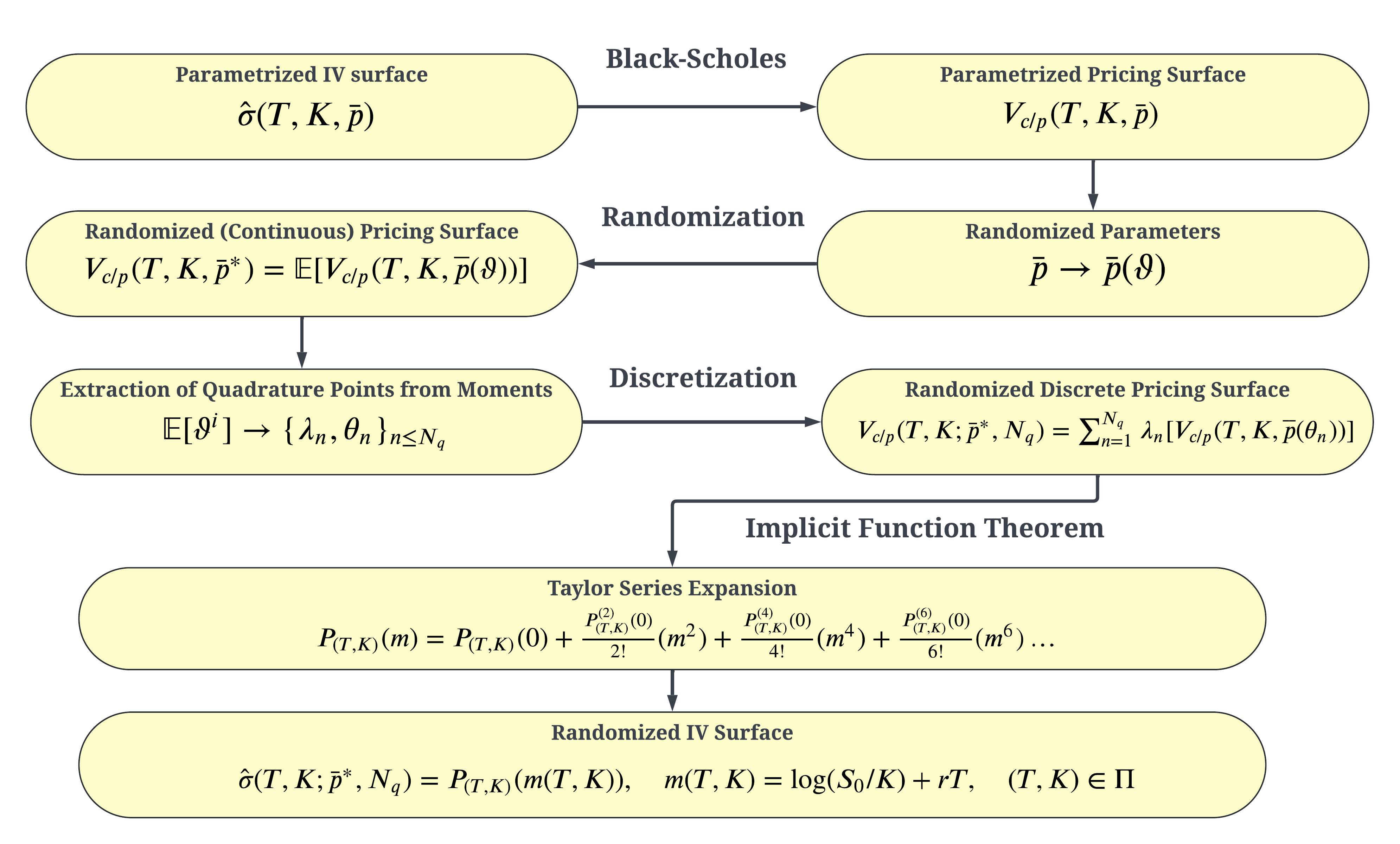}
    \caption{Process of randomization in steps}
    \label{fig:flow}
\end{figure}

%%%%%%%%%%%%%%%%%%%%%%%%%%%%%%%%%%%%%%%%%%%%%%%%%%%%%%%%%%%%
\section{Illustrative Examples}
\label{sec:numerics}
The process of randomization transforms a parametric implied volatility surface into a new parametric implied volatility surface with extra parameters and flexibility. We examine a few examples of this randomization process to better understand its mechanics and effectiveness. A repository with a Python implementation is available on \href{https://github.com/NFZaugg/ImpliedVolatilityRandomization}{GitHub}.
\subsection{Randomized Flat Volatility}
 We start by examining the most trivial example of a parametrization, which we call the \emph{flat parametrization}. This parametrization is given as the constant function $\hat\sigma(T,K; \sigma) = \sigma$, where $\sigma \geq 0 $. Since the parametrization is independent of time $T$ and strike $K$, the resulting implied volatility surface is a flat plane with a level of $\sigma$, and the parameter space $\mathcal{D}$ is given by $\mathcal{D} = [0,\infty)$. This parametrization is equivalent to pricing options under the Black-Scholes model with volatility $\sigma$, which is known to not fit well to usual market conditions. We aim to increase its flexibility by randomizing the parameter $\sigma$ and substituting it with a suitable random variable $\vartheta$. Since $\sigma$ must be positive, the random variable $\vartheta$ must be chosen almost surely positive. We propose that $\vartheta$ follows a log-normal distribution with mean $\mu$ and variance $\nu^2$, such that  $\log (\vartheta) \sim \mathcal{N}(\mu,\nu^2)$ for some parameters $\mu, \nu$ in the new parameter space $\mathcal{D}^* = \R \times [0,\infty)$. The randomized pricing surface is then given by \Cref{eqn:genericPricing}, a mixture of Black-Scholes prices weighted by the probability density function of the lognormal distribution
\begin{equation}
    V_{c/p}(T,K; (\mu, \nu)) = \E\left[V_{c/p}(T,K; \vartheta )\right] = \int_0^\infty BS_{c/p}(t_0,S_0,T,K;\sigma) f_{L\mathcal{N}(\mu,\nu^2)} (\sigma) \d \sigma,
\end{equation}
where
\begin{equation}
     f_{L\mathcal{N}(\mu,\nu^2)} (x)  = \frac{1}{x \nu\sqrt{2\pi}} \exp{\left( - \frac{(\log x - \mu)^2}{2\nu^2}\right)}.
\end{equation}
We obtain the discretized randomization price $V_{c/p}(T,K; (\mu, \nu), N_q)$ for $N_q$ quadrature points by substituting the integral in the equation with a finite sum\footnote{Note that this a similar result as Brigo and Mercurio ~\cite{brigo2002lognormal}, since the randomized price is a mixture of Black-Scholes prices.}
\begin{equation}
    V_{c/p}(T,K; (\mu, \nu), N_q) = \sum_{n=1}^{N_q}\lambda_n BS_{c/p}(t_0,S_0,T,K;\sigma_n).
\end{equation}
The quadrature points $\sigma_n$ and quadrature weights $\lambda_n$ can be obtained by computing the moments $\mu_i = {\E[\vartheta^{i}]=\e^{i\mu +i^{2}\nu ^{2}/2}}$ for all $i \leq 2N$ and computing the matrices $M, R$ and $J$ as described in \ref{sec:appendix_qp}. To obtain the implied volatility surface of the randomized parametrization, we can use a root-finding algorithm to find the implied volatility for each option price given by $V_{c/p}(T,K; (\mu, \nu), N_q)$, or we can derive the expansion terms from \Cref{eq:approx-taylor} to obtain an analytical expression for the implied volatility surface $\hat{\sigma}(T,K; (\mu,\nu))$. We examine a slice of the randomized surface for two different sets of parameters. \Cref{fig:randomized-flat} show the results for the expansions with different numbers of coefficients and the ``exact" implied volatility obtained through a root-solving algorithm. In the experiment, we use $r=2\%, T=3$ and $N_q=4$ quadrature points.
\begin{figure}[H]
    \centering
    \includegraphics[width=\linewidth]{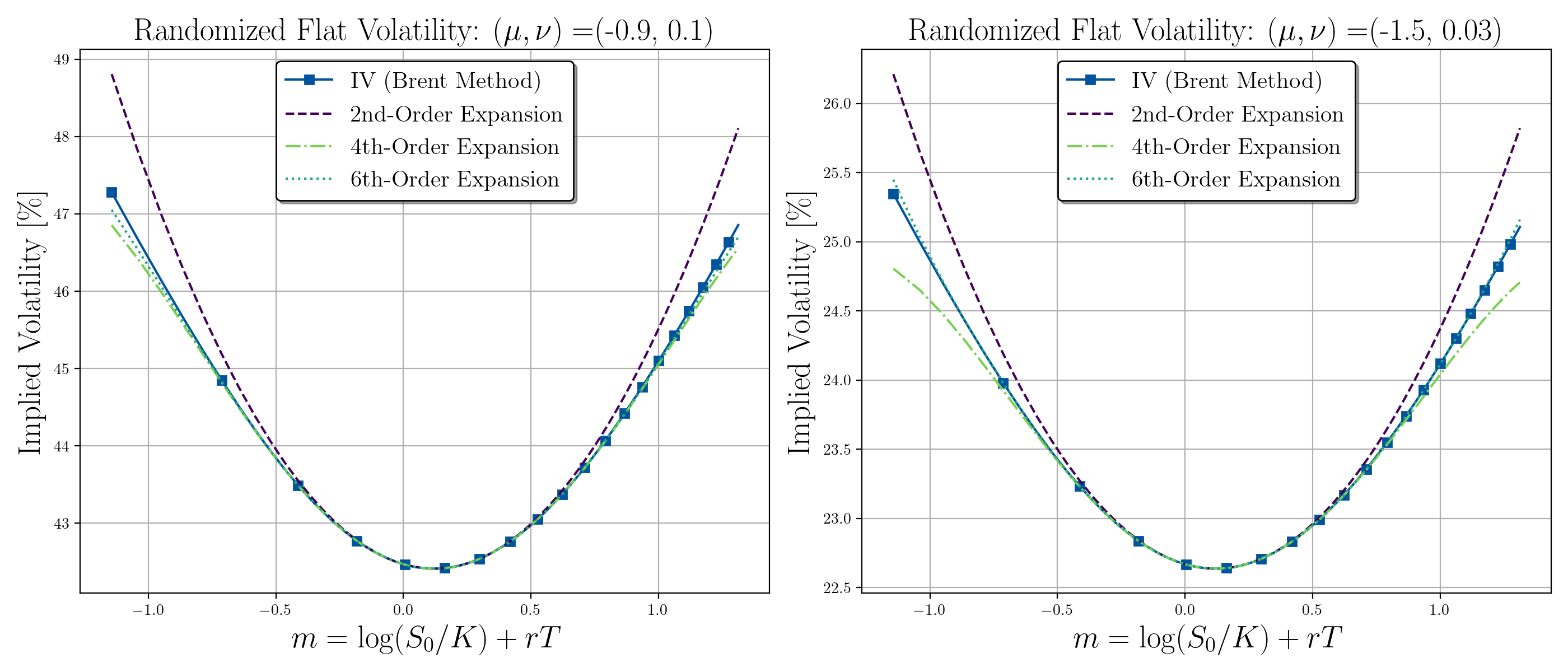}
    \caption{Randomized volatility slice for two different sets of parameters.}
    \label{fig:randomized-flat}
\end{figure}
The randomization of the previously flat volatility surface yielded a new parametrization with two parameters $\mu,\nu$, which is able to form the characteristic ``volatility smile". The result is a variant of the lognormal mixture model~\cite{brigo2002lognormal}, which is known to be able to form a volatility smile. The main difference between the approaches is that the randomization defines the volatility smile using only two parameters, $\mu$ and $\eta$ (excluding the choice of $N_q=4$), offering a more parsimonious alternative to the lognormal mixture model, which uses 7 parameters\footnote{Four volatilities and four weights, where one of the weights is given by the condition that they sum to 1.}. While this may limit some flexibility compared to the discrete mixture model, it simplifies calibration and reduces computational complexity.

\Cref{fig:randomized-flat} shows that the 6th-order approximation is able to closely match the exact implied volatility obtained from the root-finding algorithm. We observe that the fit is worse on the right-hand side for the lower-order expansions, although the shape of the implied volatility smile is similar. The reason is that the overall volatility level is lower on the right-hand side. This apparent increase in the importance of the higher-order coefficients can be explained by examining \Cref{eq:approx-taylor}, where each term is divided by a power of $\Sigma_0 = \frac{1}{2}P_{(T,K)}(0)\sqrt{T}$. 
\begin{rem}[Radius of Convergence]
The Taylor expansion of implied volatility around $m=0$ is subject to a possible finite radius of convergence, beyond which the series may diverge even for an infinite number of expansion terms. While the exact radius is parameter-dependent and theoretically inaccessible, our numerical experiments in \Cref{fig:randomized-flat} indicate that the approximation converges within the plotted range. For deeper OTM options, the accuracy of the expansion should be compared to the implied volatilities obtained from a root solving algorithm.
\end{rem}
The advantage of using the $n$-order expansion versus the application of a root-finding algorithm to the randomized pricing surface is based on the computation complexity. We examine the computational time for an increasing number of pairs $(T,K)$. The root finding algorithm uses Brent's method and terminates once the relative difference between two estimates is at most $1e-08$. If available, the method uses the implied volatility from the previous (usually neighboring) pair of $(T,K)$ to increase the efficiency. \Cref{tbl:rand_time} shows the comparison of the 2nd, 4th and 6th order approximation to the Brent's method. The analytical method demonstrates a negligible increase in computation time as the number of strike/time pairs increases due to its computational efficiency and independence from iterative procedures required by root-finding algorithms like Brent's method.
\begin{table}[H]
    \centering
    \begin{tabular}{l|l|l|l|l}
    \hline
        Number of $(T,K)$ & $10^3 $& $10^4$ &$5.0\cdot10^4$& $10^5$ \\ \hline\hline
        Brent (s) & 0.102 & 1.1336 &9.889& 26.52 \\ \hline
        2nd-order expansion (s) & 0.0009 & 0.001& 0.001 & 0.001 \\ \hline
        4th-order expansion (s) & 0.0006 & 0.0006& 0.0007 & 0.0007 \\ \hline
        6th-order expansion (s) & 0.0006 & 0.0007 & 0.0007& 0.0008 \\ \hline
    \end{tabular}
    \caption{Comparison of analytic expansion vs Brent's method for increasing amount of $(T,K)$: The analytical method is not affected by the increase in strikes, while the Brent method is expected to be $\mathcal{O}(n)$, although even slower in practice due to the increasing memory usage.}
    \label{tbl:rand_time}
\end{table}
We conclude that the randomization of the flat volatility surface using a lognormal random variable allows us to create volatility smiles by adding a single parameter. Although the shape of the smile can be altered by changing the parameter or even using an alternative distribution for $\vartheta$, the additional flexibility is limited (see also~\cite{brigo2002lognormal} for a discussion on this issue). 
\subsection{Randomized SABR parametrization}
In the second example, we consider the \emph{SABR parametrization}, a well-known volatility parametrization introduced by Hagan et al.~\cite{Hagan:2002}. The parametrization is derived from the SABR model\footnote{The SABR model is a stochastic volatility model with the same parameters, such that \begin{align*}
     \d F_t &= \sigma_t F_t^\beta \d W_t,   \\
     \d \sigma_t &= \gamma \sigma_t \d Z_t,
\end{align*} with $\sigma_0 = \alpha$ and the stochastic drivers such that $\d \langle W_t,Z_t\rangle = \rho \d t$. } and given by the formula
\begin{align}
    \label{eq:hagan}
    \nonumber \hat{\sigma}_{H}(T,K; (\alpha,\beta,\rho,\gamma)) = \frac{\alpha}{ \left( F \cdot K \right)^{\frac{1 - \beta}{2}} 
    \left( 
        1 + \frac{(1 - \beta)^2}{24} \log^2{\left(\frac{F}{K}\right)} 
        + \frac{(1 - \beta)^4}{1920} \log^4{\left(\frac{F}{K}\right)} 
    \right) }\cdot \left(\frac{z}{x(z)}\right)\\
   \cdot \left( 
    1 + \left(
        \frac{(1 - \beta)^2}{24} \cdot \frac{\alpha^2}{(F \cdot K)^{1 - \beta}} 
        + \frac{1}{4} \cdot \frac{\rho \cdot \beta \cdot \gamma \cdot \alpha}{(F \cdot K)^{\frac{1 - \beta}{2}}} 
        + \frac{2 - 3\rho^2}{24} \cdot \gamma^2 
    \right) \cdot (T-t_0) 
\right),
\end{align}
where $F = \e^{r(T-t_0)} S_0$ is the $T$-forward of the underlying and 
\begin{align}
    z &= \frac{\gamma}{\alpha} \cdot \left( F\cdot K \right)^{\frac{1 - \beta}{2}} \cdot \log{\left(\frac{F}{K}\right)}, \quad\quad  x(z) = \log{\left( \frac{\sqrt{1 - 2\rho z + z^2} + z - \rho}{1 - \rho} \right)}.
\end{align}
The SABR parametrization is defined by the set of 4 parameters $p=(\beta,\alpha,\rho,\gamma)$ on the parameter space 
\begin{equation}
    (\beta,\alpha, \rho, \gamma) \in \mathcal{D} =[0,1] \times [0,\infty) \times (-1,1) \times [0,\infty).
\end{equation}
Although the parametrization defines an entire volatility surface, market practice is to use the SABR parametrization ``slice-wise", meaning that the calibration is done per volatility slice $\{K \mapsto \hat\sigma_{T_n}( K; \overline{p}), n \leq N\}$ for the set of $N$ expiries observed in the market. The surface is then constructed by a linear interpolation, which is not guaranteed to be free of arbitrage. The details on the interpolation and the conditions are provided later.

Since the SABR parametrization is derived from a stochastic volatility model, the parametrization struggles to fit certain market scenarios in which the market does not follow the parametric regime, such as short-term index option chains. Almost perfect calibration is often impossible for these instances as the parameters reach their limits. We will use the methodology of parameter randomization on the SABR parametrization to increase its flexibility and show that with the help of randomization, the new parametrization will be able to fit the market better. In particular, the randomization substitutes the constant parameter $\gamma$ with a Gamma random variable set by two parameters $k,\theta$. The remaining parameters $\beta,\alpha, \rho$ are not randomized but remain deterministic, yielding a randomized parametrization of 5 parameters $\overline{p}^* = (\beta,\alpha,\rho,k,\theta)$. The parameters $k >0,\theta >0 $ are the shape and scale parameter of a Gamma random variable $\vartheta \sim \Gamma(k,\theta)$ with probability density function
\begin{equation}
    f_{\vartheta(k,\theta)}(x) = \frac{1}{\Gamma(k) \theta^k}x^{k-1}\e^{-x/\theta},
\end{equation}
where $\Gamma(k)$ is the Gamma function. The random variable is almost surely positive, which makes it suitable for a randomization of $\gamma$, since the domain of $\gamma$ is $[0,\infty)$. In this case, the randomized price function is given by
\begin{equation}
        V_{c/p}(T,K; \left(\beta,\alpha,\rho,k, \theta\right))= \int_0^\infty V_{c/p}(T,K;\left(\beta,\alpha,\rho,\gamma\right)) f_{\vartheta(k,\theta)}(\gamma)\d \gamma.
\end{equation}
We transform the continuous randomization into the discretized model. The moments of the Gamma distribution are given by $\E[\vartheta^i] = \theta^i \frac{\Gamma(k+i)}{\Gamma(k)}$ for any $0 \leq i \leq 2N_q$ and we obtain the randomized option price function as
\begin{equation}
\label{eq:rand_hagan_prices}
        V_{c/p}(T,K; \left(\beta,\alpha,\rho,k, \theta\right), N_q)= \sum_{n=1}^{N_q}\lambda_n V_{c/p}(T,K;\left(\beta,\alpha,\rho,\gamma_n\right)),
\end{equation}
where $\{\lambda_n, \gamma_n\}_{n=1}^{N_q}$ are the quadrature weights and points of the Gamma distribution with parameters $k,\theta$. The implied volatility $\hat{\sigma}(T,K,\left(\beta,\alpha,\rho,k, \theta\right), N_q)$ of the randomized SABR parametrization can be obtained by solving the inverse problem to obtain an exact solution, or by using the expansion of \Cref{thm:expansion}. \Cref{fig:skew1} shows the effect of the new parameters $k, \theta$ on the implied volatility shapes. 
\begin{figure}[H]
    \centering
    \includegraphics[width=\linewidth]{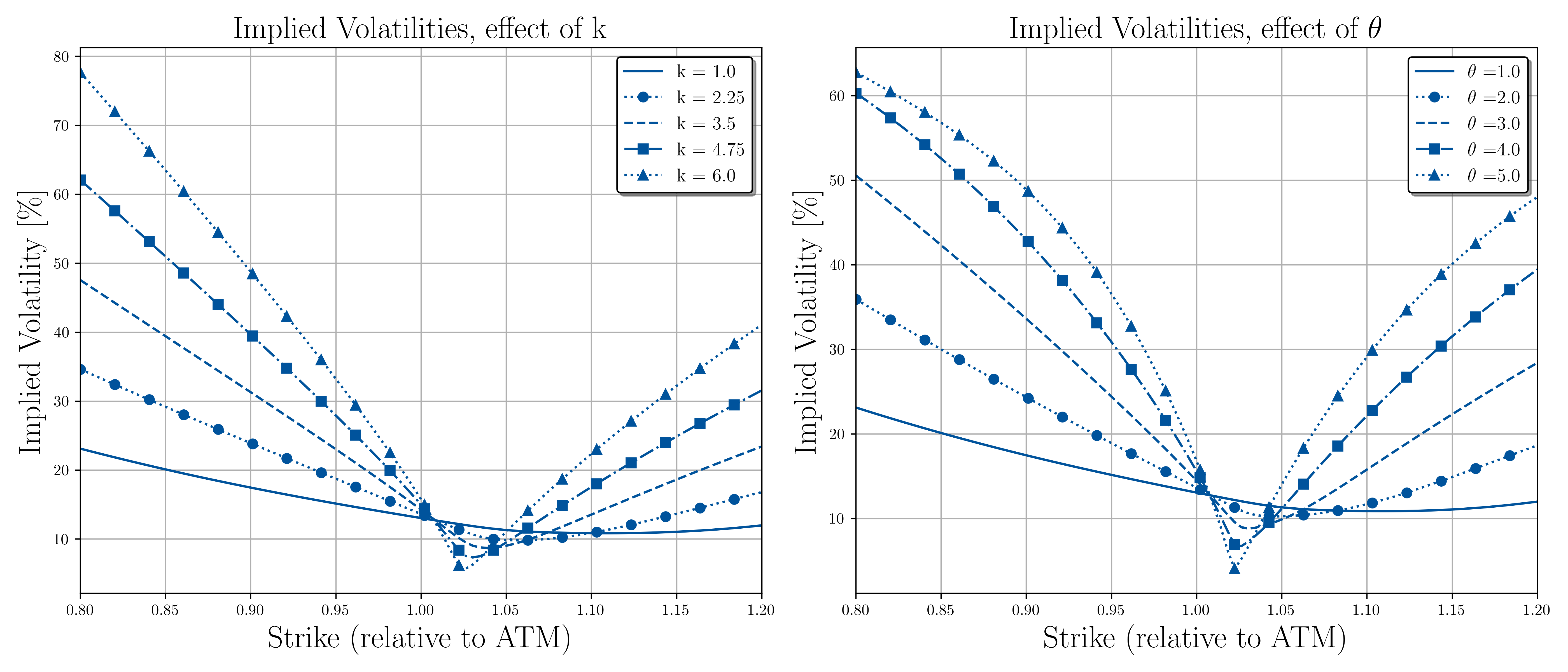}
    \caption{Implied volatility skew: The parameters $k$ and $\theta$ have a strong effect on the shape of the implied volatility. Left: $\left(\beta,\rho,\theta\right) = (0.9,-0.7,1)$, Right: $\left(\beta,\rho,k\right) = (0.9,-0.7,1)$. The parameter $\alpha$ is chosen such that the ATM-vol equals $0.2$. For both plots, we chose $N_q=10$.}
    \label{fig:skew1}
\end{figure}
Although \Cref{fig:skew1} demonstrates how the new parameters influence the skew of the implied volatility curves, it remains unclear how the randomization of the $\gamma$ parameter specifically enhances the fit of the traditional SABR parametrization. In the standard SABR parametrization, the $\gamma$ parameter directly influences the skew of the implied volatility profile. After re-centering to the ATM volatility, a higher $\gamma$ results in a more pronounced skew. It widens the difference between the maximum and minimum implied volatilities within a given strike range. Since $\gamma$ is the sole parameter governing skew, control over the exact curvature of the skew is inherently limited.

By randomizing $\gamma$, we add parameters that modulate not only the level of the skew but also the curvature of the shape around the ATM point. To evaluate the impact of this modification, we conduct the following experiment: using the randomized SABR parametrization, we generate two sets of implied volatility quotes across 40 strikes with parameters $(\beta,\alpha,\rho,k,\theta) = (0.9,0.03,-0.1,0.5,2)$ and $(0.9,0.03,-0.1,0.1,10)$ and $N_q=10$. We then apply an optimizer to fit the traditional SABR parametrization to these quotes, resulting in nearly identical optimal parameter values for both sets, specifically $(\beta, \alpha, \rho, \gamma) = (0.9, 0.25, -0.135, 3.5)$.
\begin{figure}[H]
    \centering
    \includegraphics[width=\linewidth]{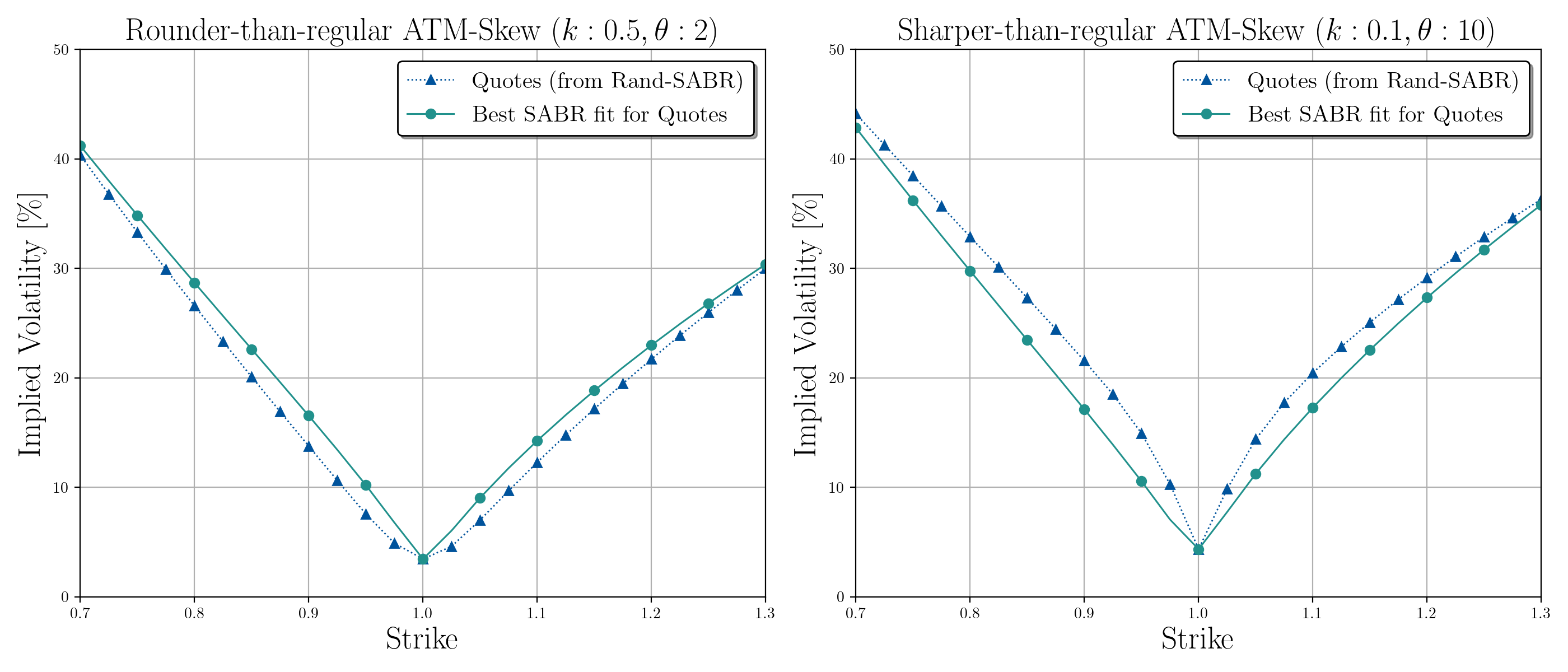}
    \caption{Comparison SABR vs Rand-SABR: the randomized parameters determine the curvature around the ATM point, while leaving the amount of the skew constant.}
    \label{fig:skew-under}
\end{figure}
The experiment in \Cref{fig:skew-under} shows that the SABR parametrization cannot account for the detailed curvature produced by the randomized SABR. The parametrization is limited by the choice of a single parameter. The randomization of the parameter helps to gain flexibility, which can be essential in fitting the market, as we will see in the next subsection. 
\subsection{Fitting the model to the market}
To study the flexibility of the new randomized SABR parametrization, we apply it to real data to examine its fit. We collect a set of option quotes of index options on the SPX index and obtain the market prices for European put and call options on the 31\textsuperscript{st} of July 2024 for a range of strikes and expiries. The data is downloaded from the Cboe data shop~\cite{cboeSPX}, which is based on actual transactions occurring on the Cboe exchange. Information about the data and how it is processed is found in \ref{sec:appendix_results}.

For the randomized SABR parametrization, we choose $N_q=2$ quadrature points and choose a Gamma distribution for $\vartheta$ as described before. We fix $\beta = 0.9$ to reduce the number of parameters to calibrate, following common practice and empirical observations that this value provides a good balance between model flexibility and calibration stability~\cite{Oosterlee_Grzelak_2020}. The remaining parameters are fitted to the market quotes using an optimization scheme. The optimization can be achieved on the randomized prices given by \Cref{eq:rand_hagan_prices} or on the implied volatility surface using the expansion of \Cref{thm:expansion}. \Cref{tbl:Params} shows the calibrated parameters for the randomized SABR parametrization. Since we choose $N_q=2$, the surface is a mixture of two price surfaces with different $\theta_n$ and we report the respective values to show the impact of the Gamma parameter $k,\theta$ on the mixture parameters.  
\begin{rem}[Choice of $N_q$]
Generally, a larger choice of $N_q$ means that the discretized randomization approximates the continuous randomization better, and $N_q=2$ does not seem like a good approximation. However, since the discretization is arbitrage-free for any $N_q \in \N$, an exact approximation of a continuous $\vartheta$ is not necessary, and $N_q$ can be treated simply as a parameter to further specify the nature of the randomization of the parameters. In this case, $N_q=2$ offers an excellent fit. 
\end{rem}
\begin{table}[H]
    \centering
    \begin{tabular}{c || c|c|c|c|c||c|| c|c|c|
    c}
    \textbf{Expiry} & \multicolumn{5}{c||}{ \textbf{Randomization Parameters}}& \textbf{Var }& \multicolumn{4}{c}{\textbf{Mixture Parameters}}\\ \hline
        & ${\beta}$ & $\alpha$ & $\rho$ & $k$ & $\theta$ & Var $\vartheta$ & $\theta_1$ & $\theta_2 $& $\lambda_1$ & $\lambda_2$ \\ \hline\hline
       Aug 24   & 0.9 & 0.322 & -0.595 & 2.379 & 1.04 & 2.572 & 1.602 & 5.424 & 0.772 & 0.228 \\ \hline
                        Sep 24   & 0.9 & 0.308 & -0.617 & 10.587 & 0.187 & 0.37 & 1.531 & 2.804 & 0.647 & 0.353 \\ 
                        Oct 24   & 0.9 & 0.306 & -0.63 & 11.724 & 0.145 & 0.247 & 1.33 & 2.366 & 0.64 & 0.36 \\ 
                        Nov 24   & 0.9 & 0.325 & -0.649 & 9.987 & 0.144 & 0.206 & 1.102 & 2.054 & 0.651 & 0.349 \\ 
                        Dec 24   & 0.9 & 0.329 & -0.657 & 14.387 & 0.09 & 0.116 & 1.031 & 1.737 & 0.627 & 0.373 \\ 
                        Jan 25   & 0.9 & 0.328 & -0.673 & 15.439 & 0.078 & 0.094 & 0.968 & 1.601 & 0.623 & 0.377 \\ 
                        Feb 25   & 0.9 & 0.332 & -0.69 & 15.401 & 0.071 & 0.079 & 0.883 & 1.461 & 0.623 & 0.377 \\ 
                        Mar 25   & 0.9 & 0.335 & -0.689 & 20.794 & 0.051 & 0.053 & 0.865 & 1.337 & 0.607 & 0.393 \\ 
                        Jun 25   & 0.9 & 0.341 & -0.695 & 16.435 & 0.053 & 0.046 & 0.702 & 1.144 & 0.62 & 0.38 \\ 
                        Sep 25   & 0.9 & 0.347 & -0.681 & 15.715 & 0.048 & 0.037 & 0.609 & 1.004 & 0.622 & 0.378 \\ 
                        Dec 25   & 0.9 & 0.35 & -0.656 & 15.016 & 0.045 & 0.031 & 0.543 & 0.904 & 0.625 & 0.375 \\ 
                        Dec 26   & 0.9 & 0.359 & -0.599 & 460.433 & 0.001 & 0.001 & 0.508 & 0.558 & 0.523 & 0.477 \\ 
                        Dec 27   & 0.9 & 0.377 & -0.595 & 681.713 & 0.001 & 0 & 0.398 & 0.43 & 0.519 & 0.481 \\ 
    \end{tabular}
    \caption{Parameter calibration Randomized SABR}
    \label{tbl:Params}
\end{table}
To compare the fit, we run the optimization on the regular SABR parametrization and a lognormal mixture (according to \cite{brigo2002lognormal}) as a benchmark. For the lognormal mixture we use 4 mixture terms and a displacement parameter to obtain the best possible fit. This yields an 8-parameter optimization problem (displacement, 4 volatilities, 3 weights), which we fit using a global optimizer. The fit of the two models is shown in \Cref{fig:skew} compared to the market quotes for the short time-to-maturities and in \Cref{fig:compare-longterm} for the long time-to-maturities. \Cref{tab:mse} shows the observed mean-squared errors of the implied volatilities over the entire dataset.  We see that the randomized SABR parametrization has an excellent fit and can replicate the skew observed in the short-term maturity options on the SPX index. To show the consistency of the improved fit over time we repeat the experiment for the option chains at the two consecutive months ends - 2024-08-30 and 2024-09-30. The additional fitting results are presented in \ref{sec:appendix_results}.
\begin{figure}[H]
    \centering
    \includegraphics[width=\linewidth]{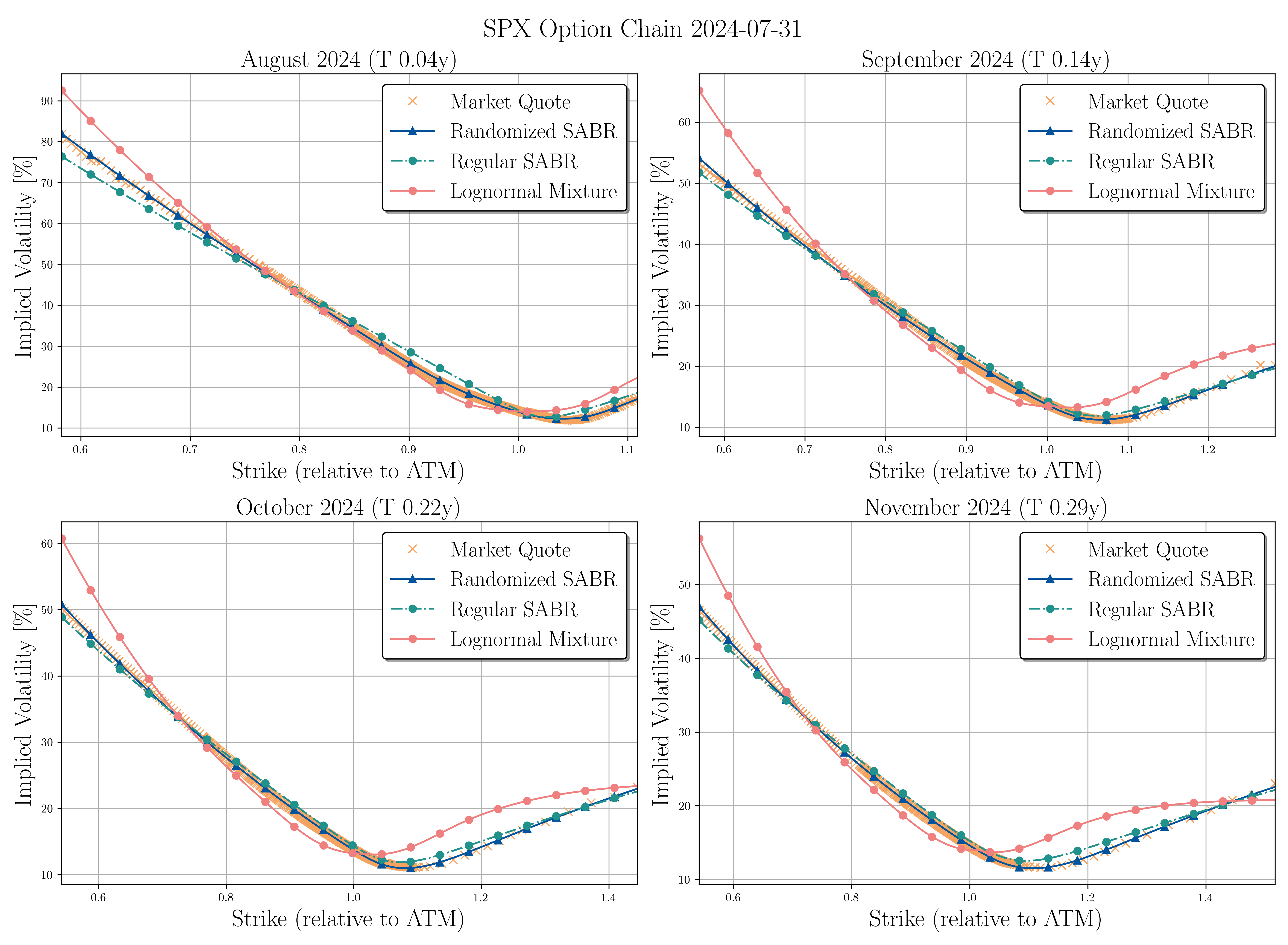}
        \caption{Short time to maturity fit: the graph shows the fit of the randomized SABR vs the benchmarks for short time-to-maturities. }
    \label{fig:skew}
\end{figure}
\begin{figure}[H]
    \centering
    \includegraphics[width=\linewidth]{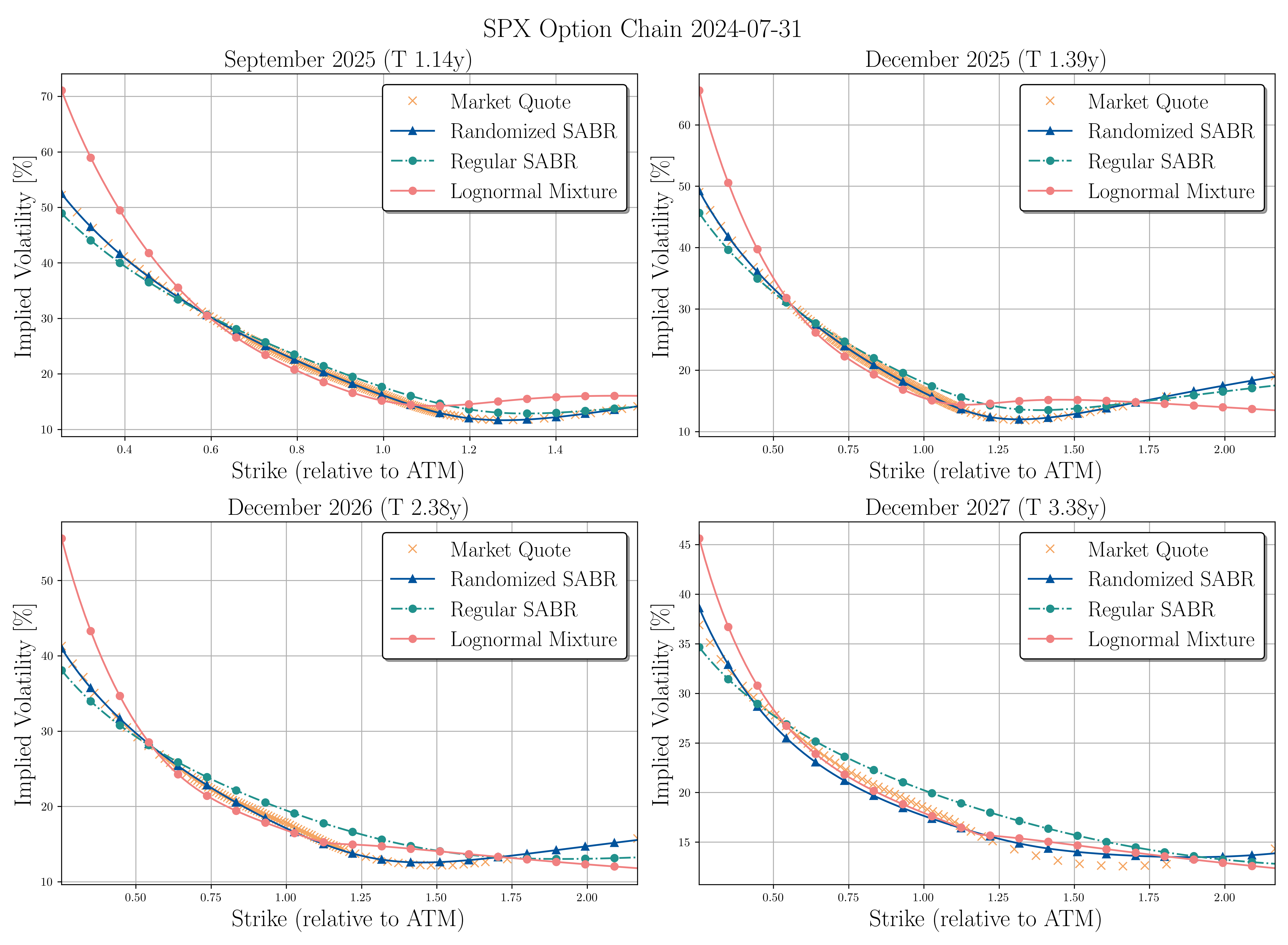}
    \caption{Long time to maturity fit: the graph shows the fit of the randomized SABR vs the benchmarks for long time-to-maturities. }
    \label{fig:compare-longterm}
\end{figure}
\begin{table}[H]
    \centering
    \begin{tabular}{l|l|l|l|}
     \hline
    Jul 31 2024&\multicolumn{3}{c}{\textbf{Mean Squared Errors (MSE)}}\\
      \hline
    \hline
        Expiry Month & Lognormal Mixture& Regular SABR & Randomized SABR  \\ \hline \hline
        Aug 24 & 7.24E-04 & 4.82E-04 & 1.23E-05 \\
        Sep 24 & 8.26E-04 & 1.00E-04 & 6.23E-05 \\
        Oct 24 & 6.73E-04 & 6.74E-05 & 4.69E-05 \\
        Nov 24 & 6.33E-04 & 6.88E-05 & 2.80E-05 \\
        Dec 24 & 6.58E-04 & 6.69E-05 & 1.23E-05 \\
        Jan 25 & 7.61E-04 & 8.02E-05 & 1.01E-05 \\
        Feb 25 & 1.08E-03 & 8.90E-05 & 9.10E-06 \\
        Mar 25 & 1.54E-03 & 8.76E-05 & 7.10E-06 \\
        Jun 25 & 1.65E-03 & 1.31E-04 & 3.20E-06 \\
        Sep 25 & 9.14E-04 & 1.61E-04 & 1.00E-06 \\
        Dec 25 & 7.00E-04 & 1.73E-04 & 1.10E-06 \\
        Dec 26 & 7.07E-04 & 3.32E-04 & 1.20E-05 \\
        Dec 27 & 3.68E-04 & 2.74E-04 & 3.29E-05 \\
         \hline
    \end{tabular}
    \caption{Mean Square Error across entire dataset}
    \label{tab:mse}
\end{table}
We observe from \Cref{tbl:Params} that the variance of the randomized parameter $\vartheta$, which is given by $\Var(\vartheta) = k\theta^2$ decreases as the time to maturity increases. This shows that the longer the time-to-maturity, the less randomization is required to fit this observed market regime. While $\theta_1$ and $\theta_2$ are far apart for the short-expiry months, the values are much closer for longer time-to-maturities. However, from the graphs it is also noticeable that the regular SABR in unable to fit the price surface well.

Since we fit the SABR parametrization slice-wise, the construction of an entire surface requires combining the slices into a unified surface. This is a common practice for the SABR parametrization, and is commonly solved by using a linear interpolation in the \emph{total implied variance}~\cite{bender2020arbitrage}. The interpolation guarantees the absence of calendar arbitrage, but not absence of butterfly arbitrage within the interpolated slices. To illustrate this, suppose that we calibrate a set of slices $\{\hat\sigma_{T_i}(K; \overline{p}_i^*, N_q) : i \leq N\}$, which are arbitrage-free in strike direction, and suppose that there is no calendar-spread arbitrage in the market. This means that, if $T_i \leq T_j$, then
\begin{equation}
    \label{eq:cond-cal}
    V(T_i, K; \overline{p}_i^*, N_q) \leq V(T_j, K; \overline{p}_j^*, N_q), \quad \forall K \in \Pi_K.
\end{equation}
It is well-known~\cite{gatheral2014arbitrage} that the absence of calendar-spread arbitrage is equivalent to the condition that the \emph{total implied variance} is increasing, i.e.
\begin{equation}
    \hat\sigma^2_{T_i}(K; \overline{p}_i^*, N_q)T_i \leq \hat\sigma^2_{T_j}(K; \overline{p}_j^*, N_q)T_j, \quad \forall K \in \Pi_K.
\end{equation}
Under this condition, a linear interpolation in total implied variance suffices to guarantee the absence of calendar arbitrage. Let $a\in [0,1]$ be such that $T = (1-a)T_i + a T_j$ and define 
\begin{equation}
    \hat\sigma(T, K; \overline{p}^*, N_q) = \frac{1}{\sqrt{T}}\sqrt{ (1-a) \hat\sigma^2_{T_i}(K; \overline{p}_i^*, N_q)T_i + a \hat\sigma^2_{T_j}(K; \overline{p}_j^*, N_q)T_j}, \quad \forall K \in \Pi_K.
\end{equation}
Since the interpolation in time does \emph{not} guarantee the absence of butterfly arbitrage in the interpolated slices, each interpolated slice must be checked separately for arbitrage in the strike direction. 

%##############
\section{Randomized Spot Volatility Parametrizations \& Near Expiry Options for Equities}
\label{sec:rand-Spot}
In this section of this paper, we consider an extension of the randomization of parametric volatility surfaces to a randomization of the ``spot" parameter of the Black-Scholes formula, rather than implied volatility surface parameters. This new parametrization follows the same principles as the regular randomization described above, and is also defined as an expectation of a randomized price surface. The spot randomization replaces the input of the spot parameter in the Black-Scholes formula with a random variable $\vartheta$, which is centered at the current spot price $S_0$. We show below that, under certain conditions, such a randomization of the spot parameter is arbitrage-free, and that this parametrization is particularly effective to model option markets of very short-term options, known as \emph{near-expiry options}. Near expiry options, also known as \emph{zero-day-expiry options} (0DTE) options\footnote{To be precise, 0DTE options have a time-to-maturity of less than a day. Since we also consider options of 1-3 days expiry, we prefer the expression \emph{near expiry options}.} are options whose maturity date is imminent, i.e., only a few days in the future, or even expire at the end of the day. On the day of an earning announcement, these options chains sometimes exhibit volatility slices with one or more concave sections, such as the W-shaped implied volatility slice or the ``mustache"- shape~\cite{alexiou2023pricing, glasserman2023w}. The unusual shape of implied volatility is an indication of a \emph{bimodal} or \emph{multimodal} risk-neutral probability distribution\footnote{Although note that this is neither a sufficient nor a necessary condition.}, i.e., a probability distribution function that exhibits two modes. The rationale is that the uncertainty from the earnings yields a multimodal risk-neutral probability distribution of $S_T$, reflecting the evolution of the stock price given different scenarios from the earnings (i.e., positive surprise vs negative surprise). Traditional implied volatility parametrizations, which have their origins in stochastic diffusion models, such as the SABR parametrization or the SVI, struggle to produce such shapes since the stochastic driver, which is a diffusion process, is inherently single-modal. The novel randomization method on the spot parameter offers a valuable solution in this case, as the randomization of the spot parameter offers an effective way to create superpositions of the PDFs with varying modes. This procedure yields a mixture density of a similar kind to well-established lognormal mixture model with varying modes~\cite{rebonato2004unconstrained,bloch2011smiling,glasserman2023w}, which were developed as an extension of the classical lognormal mixture (as introduced by Brigo et al. \cite{brigo2002lognormal}), where means of the mixed densities are not the same. We refer to these types of lognormal mixture parametrizations as \emph{multimodal lognormal mixture} (MM LNM) parametrizations. 

In this section we prove that the spot randomized volatility shape is arbitrage-free and show that it produces excellent results on the empirical data, utilizing data from the ticker AMZN on the day of the earnings announcement for Q1 in April 2018. The results show that the spot randomization has a better fit than the MM LNM parametrization, and a significantly better fit than the classical diffusion-based parametrizations SABR and SVI for this particular type of problem. Since the options we consider are short-expiry options, we focus the fitting of the data on a single volatility slice for a fixed time $T$. Although the randomization still provides an entire implied volatility surface, the main purpose of the spot randomization is to fit the non-standard volatility shape and to provide a viable price for any strike for the given expiry.

\subsection{Randomized Spot Parameter}
Let $S_0$ be the current spot price of a financial asset $S$ at time $t_0$. Suppose that $\hat\sigma(T,K,\overline{p})$ is a parametrization with parameters $\overline{p}$ and let $p_{S_T;\overline{p}}$ be the probability density function under a suitable risk-neutral measure $\Q$, given by the parametrization for $S$ at the fixed time $T$. Since the risk-neutral probability density is a PDF of a random variable $S_T$ on a risk-neutral probability space, a necessary condition for it to be arbitrage-free is that $S_T$ is centered at its forward $S_0 \e^{r(T-t_0)}$. This can be shown as if
\begin{equation}
\label{eq:rnd}
    p_{S_T;\overline{p}}(x) = \e^{r(T-t_0)} \frac{\d^2 V_c(t_0,S_0,T,x; \overline{p})}{\d x^2},
\end{equation}
then, an integration of \Cref{eq:rnd} multiplied with $x$ shows that the expectation of $S_T$ is $\E^\Q[S_T] = S_0\e^{r(T-t_0)}$. If we assume $\vartheta$ to be a random variable with law $\mu$ and mean $\E[\vartheta] = S_0$ and assume it is almost surely positive, we can consider a randomization of the spot entry $s$ of the Black-Scholes formula $BS_{c/p}(t_0,s,T,K;\hat\sigma(T,K))$, by replacing $s$ with the random variable $\vartheta$. As it will be presented later, the fact that it is centered at $S_0$ is necessary to ensure that the randomization is arbitrage-free. We find equivalently to \Cref{eq:randPrice}:
\begin{equation}
\label{eq:randPriceSpot}
\E[V_{c/p}(\vartheta,T,K; \overline{p})]=\int_{\R} BS_{c/p}(t_0,\theta,T,K;\hat\sigma(T,K;\overline{p})) \d \mu(\theta),
\end{equation}
where the value of the call/put options is now denoted by $V_{c/p}(S_0,T,K; \overline{p})$ with the extra parameter $S_0$. The randomized price surface \Cref{eq:randPriceSpot} is an average of Black-Scholes prices, where the varying entry is the spot price entry, as opposed to the parameters $\overline{p}$ as before. Ensuring that the spot randomization is given in terms of a set of parameters, we define an extended parameter vector $\overline{p}^* = (p_1,p_2,\dots,p_m,q_1,q_2,\dots) \in \mathcal{D}^*$, which contains all previous parameters plus the parameters $(q_1,q_2\dots)$ specifying the distribution of $
\vartheta$. This defines the pricing surface of the spot randomization, and we can define the randomized volatility surface $\hat\sigma_s( T,K;\overline{p}^*)$ with the new parameter set as the Black-Scholes inverse of \Cref{eq:randPriceSpot}. 
\begin{lem}[Arbitrage-free spot randomization]
    Let $\vartheta$ be centered at $S_0$ and almost surely positive. Then, the randomized spot volatility surface $\hat\sigma_s(T,K;\overline{p}^*)$, given by \Cref{eq:randPriceSpot}, is arbitrage-free.
\end{lem}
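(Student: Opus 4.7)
The plan is to mirror the proof of \Cref{lem:arb-free}, replacing the mixing over an internal parameter $p_i$ with mixing over the spot, and to use the centering hypothesis $\E[\vartheta]=S_0$ at the single step where the mean of the mixture density is computed. The main obstacle is essentially conceptual: in \Cref{lem:arb-free} each sampled risk-neutral density was already centered at the correct forward $S_0\e^{r(T-t_0)}$, whereas here the density produced by a realization $\theta$ of $\vartheta$ has mean $\theta\e^{r(T-t_0)}$, so the mixture only recovers the right first moment thanks to $\E[\vartheta]=S_0$.

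First, for each $\theta>0$ in the support of $\vartheta$, I would denote by $p_{S_T;\theta}$ the risk-neutral density implied by $\hat\sigma(T,K;\overline{p})$ with spot $\theta$; arbitrage-freeness of the base surface together with the almost-sure positivity of $\vartheta$ ensures $p_{S_T;\theta}$ is a bona fide density with mean $\theta\e^{r(T-t_0)}$, by \Cref{eq:expectation_forward}. Writing the Black-Scholes call price as $\e^{-r(T-t_0)}\int_K^\infty(x-K)p_{S_T;\theta}(x)\,\d x$, substituting into \Cref{eq:randPriceSpot}, and swapping integrals by Fubini (justified by non-negativity), I would obtain
\begin{equation*}
\E[V_c(\vartheta,T,K;\overline{p})] = \e^{-r(T-t_0)}\int_K^\infty (x-K)\bar{f}(x)\,\d x, \qquad \bar{f}(x):=\int_\R p_{S_T;\theta}(x)f_\vartheta(\theta)\,\d\theta.
\end{equation*}

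Next, I would verify that $\bar{f}$ is a valid risk-neutral density consistent with spot $S_0$. Non-negativity and $\int_0^\infty \bar{f}(x)\,\d x = 1$ follow as in \Cref{lem:arb-free}. The critical new step is the first moment: exchanging integrals,
\begin{equation*}
\int_0^\infty x\,\bar{f}(x)\,\d x = \int_\R \theta\e^{r(T-t_0)} f_\vartheta(\theta)\,\d\theta = S_0\e^{r(T-t_0)},
\end{equation*}
which is precisely where the centering hypothesis enters. The standard correspondence between admissible densities and arbitrage-free call surfaces~\cite{gatheral2014arbitrage} then gives butterfly-freeness of the randomized slice at every $T$.

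Finally, calendar-arbitrage-freeness is inherited from the base parametrization essentially verbatim from \Cref{lem:arb-free}: for any fixed $K$ and $t_1<t_2$, $V_c(\theta,t_1,K;\overline{p})\leq V_c(\theta,t_2,K;\overline{p})$ almost surely in $\theta$, and this inequality survives integration against $f_\vartheta$. Combining butterfly and calendar conditions yields the claim.
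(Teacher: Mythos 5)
Your proposal is correct and follows essentially the same route as the paper's proof: rewrite each realization's price as an integral against the risk-neutral density $p_{S_T;\theta}$, exchange the order of integration to obtain the mixture density, and invoke the centering hypothesis $\E[\vartheta]=S_0$ exactly where the first moment of the mixture is computed, so that it equals $S_0\e^{r(T-t_0)}$. Your additional verification of the calendar condition is a small strengthening -- the paper's proof stops at the butterfly argument and treats the density representation as sufficient -- but it does not change the substance of the argument.
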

\begin{proof}
    We will show that we can write $\E[V_{c/p}(\vartheta,T,K; \overline{p})]$ as an integral of a proper probability density function centered at $S_0\e^{r(T-t_0)}$, which is sufficient to ensure the absence of arbitrage~\cite{brigo2002lognormal}. Suppose that $p_{S_T; \theta }$ is the PDF according to \Cref{eq:rnd}, where $S_0$ is replaced with $\theta$. In this case, we obtain
    \begin{align}
 \nonumber\E[V_{c/p}(\vartheta,T,K; \overline{p})]&=\int_{\R} BS_{c/p}(t_0,\theta,T,K;\hat\sigma_s(T,K;\overline{p})) \d \mu(\theta)\\
 \nonumber&=\int_{\R} \int_K^\infty (x-K) p_{S_T;\theta }(x) \d x \d \mu(\theta)\\
 \nonumber&=\int_K^\infty  (x-K) \int_{\R}  p_{S_T;\theta }(x) \d \mu(\theta) \d x\\
& =\int_K^\infty  (x-K) \E\left[p_{S_T;\vartheta }(x)\right]  \d x.
\end{align}
The expression $x \mapsto \E\left[p_{S_T;\vartheta }(x)\right] $ is a probability density function since it is a convex combination of proper PDFs. It remains to show that it is centered in $S_0 \e^{r(T-t_0)}$, which is required for the absence of arbitrage.
\begin{align}
   \nonumber \int_{0}^\infty x\E\left[p_{S_T;\vartheta }(x)\right]\dx &=  \int_{0}^\infty x \int_{\R}p_{S_T;\theta }(x) \d \mu(\theta) \dx\\
     \nonumber&=   \int_{\R}\int_{0}^\infty x p_{S_T;\theta }(x) \d x \d \mu(\theta)\\
    \nonumber &=   \int_{\R} \theta \e^{r(T-t_0)}\d \mu(\theta) \\
        \nonumber &=    \E[\vartheta]\e^{r(T-t_0)} \\
    &=  S_0\e^{r(T-t_0)} .
\end{align}
This concludes the proof.
\end{proof}
This shows that as long as $\vartheta$ is centered at $S_0$, the randomization is free of arbitrage, as the forward price, whose PDF is given by $x \mapsto \E\left[p_{S_T;\vartheta }(x)\right]$, is centered at $S_0\e^{r(T-t_0)}$. Note that the spot randomization does not model any uncertainty of the spot price $S_0$, which remains a fixed, deterministic input to the model. The randomization is merely defined as randomizing the spot parameter $s$ of the Black-Scholes formula $BS_{c/p}(t_0,s,T,K;\hat\sigma(T,K))$. Intuitively, the randomization introduces uncertainty about potential future trends in the underlying asset's price evolution. Since the Black-Scholes formula implies a diffusion model centered at the forward price (which depends on $r$ and the spot parameter), the risk-neutral PDF of the Black-Scholes model assumes the underlying price at expiry to be centered on the forward price, with symmetric uncertainty around this expectation. This implies the model assigns the highest probability to minimal price movements, such that the mode and mean of the risk-neutral probability density coincide. Such models struggle to accommodate scenarios where large price movements are anticipated, but the direction (up or down) remains uncertain. By randomizing the spot parameter input to the Black-Scholes formula, we create a mixture of multiple Black-Scholes dynamics: the resulting risk-neutral density can become multimodal, with peaks representing distinct market views on possible future mean prices. Crucially, this randomization does not affect the $S_0$, but instead reflects uncertainty about where the price process will center itself in the future. The framework preserves arbitrage-free pricing, as the expected value of the future spot price still matches the forward price.

We now continue the randomization process by discretizing the integral using the quadrature method. We define the discrete randomized pricing surface of the spot parameter as
\begin{equation}
\label{eq:pricing_spot}
     V_{c/p}(T,K; \overline p^*, N_q) :=\sum_{n=1}^{N_q}\lambda_n V_{c/p}(\theta_n,T,K;\overline p),
\end{equation}
where $\{\lambda_n, \theta_n\}_{n \leq N_q}$ are the $N_q \in \N$ quadrature pairs of $\vartheta$.

An interesting example of the spot randomization is the following:
\begin{example}
     Suppose we obtain the risk-neutral probability density function $f_{S_T}(\cdot)$ of the asset price at a fixed expiry $T$. Consider a spot randomization of the parametric surface $\hat\sigma(T,K,\sigma) = \sigma$ with $\sigma=0$ and random spot parameter given by the PDF of $f_{S_T}(\cdot)$. If $\hat\sigma(T,K,\sigma) = 0$, the call option price $V_c(S_0,T,K,\sigma)$ is given by $(S_0- K)^+$, which means that
    \[\E[V_{c}(\vartheta,T,K; 0)] = \int_{\R}V_c(\theta,T,K,\sigma)f_{S_T}(\theta) \d \theta = \int_{\R}(\theta - K)^+f_{S_T}(\theta) \d \theta .\]
    The right-hand side is exactly the call option price function given the risk-neutral probability density function $f_{S_T}(\cdot)$. The randomization perfectly prices any option with expiry $T$.
\end{example}
 The discretized spot randomization is given by (\ref{eq:pricing_spot}). With the same technique of \Cref{thm:expansion} we aim to derive the expansion of the implied volatility surface by setting up an implicit equation 
\begin{equation}
\label{eq:implicit}
    f_\text{spot}(m,P_{(T,K)}(m)) = g_\text{spot}(m),
\end{equation}
for \Cref{eq:pricing_spot}. While the definition of $f_\text{spot}(m, P_{(T,K)}(m))= f(m, P_{(T,K)}(m))$ remains the same as in of \Cref{eq:left-side}, we redefine the function $g(m)$ to incorporate the randomization of the spot instead of the volatility. Note that in the summation of the Black-Scholes formulae, the terms no longer depend on $S_0$ but on $\theta_n$. Writing \begin{equation}
    \log(\theta_n/K) + rT = \log(S_0/K) + \log(\theta_n/S_0) + rT = m+\log(\theta_n/S_0),
\end{equation}
we can define the new right-hand side of \Cref{eq:pricing_spot} to obtain
\begin{equation}
\label{eq:newg}
    g_\text{spot}(m) := S_0 \sum_{n=1}^{N_q} \lambda_n \left[ \frac{\theta_n}{S_0} \Phi\bigg(\frac{m+ \log( \frac{\theta_n}{S_0})+\frac12\eta^2 T}{\eta\sqrt{T}}\bigg)-\e^{-m}\Phi\bigg(\frac{m+\log( \frac{\theta_n}{S_0})-\frac12\eta^2 T}{\eta\sqrt{T}}\bigg)\right],
\end{equation}
where $\eta = \hat\sigma(T,K;\overline{p})$ is the implied volatility from the parametrization. The implicit function theorem then yields:
\begin{theorem}
\label{thm:expansion_spot}
The randomized spot implied volatility surface $\hat\sigma_s(T,K;\overline{p}^*)$ at $(T,K)\in \Pi$ is given by $P_{(T,K)}(m(T,K))$, for $m(T,K) = \log(S_0/K) + rT$, where the function $P_{(T,K)}$ has a Taylor expansion series 
\begin{equation}
    P_{(T,K)}(m) = P_{(T,K)}(0) +  P_{(T,K)}'(0)m + \frac{ P_{(T,K)}^{(2)}(0)}{2!}m^2  + \frac{ P_{(T,K)}^{(3)}(0)}{3!}m^3 + \frac{P_{(T,K)}^{(4)}(0)}{4!}m^4+ \mathcal{O}(m^5),
\end{equation}
with
\begin{align*}
     P_{(T,K)}(0) &= \frac{2}{\sqrt{T}} \cdot \Phi^{-1}\left[ \frac{1}{2} \left( 1+ \sum_{n=1}^{N_q} \lambda_n \Sigma_n\right)\right],\\
     P_{(T,K)}'(0) &= \left[\sum_{n=1}^{N_q} \lambda_n \left( \frac{\Sigma_n}{\eta \sqrt{T}} + \Phi(d_n^-)\right) - \Phi(d_0^-)\right] \cdot \frac{1}{\sqrt{T} \phi(d_0^-)},\\
     P_{(T,K)}^{(2)}(0) &= \left[\sum_{n=1}^{N_q} \lambda_n \left( \frac{\Sigma'_n + 2\phi(d_n^-)} {\eta \sqrt{T}}  -\Phi(d_n^-)\right) - \Sigma'_0\right] \cdot \frac{1}{\sqrt{T}\phi(d_0^-)},
\end{align*}
and auxiliary variables
\begin{align*}
  d^{\pm}_n &:= \frac{\log( \frac{\theta_n}{S_0}) \pm \frac12\eta^2 T}{\eta\sqrt{T}}, \qquad d^{\pm}_0 := {\pm \frac12 P_{(T,K)}(0) \sqrt{T}}, \qquad \Sigma_n := \frac{\theta_n}{S_0}  \Phi(d^+_n) - \Phi(d^-_n),\\
 \Sigma'_n &:= \frac{\theta_n}{S_0}  \Phi(d^+_n) \left(\frac{\log(\theta_n/S_0)}{\eta^2T} + 1/2\right) - \Phi(d^-_n)\left(\frac{\log(\theta_n/S_0)}{\eta^2T} - 1/2\right), \\
   \Sigma'_0 &:= -\Phi(d_0^-) + \phi(d_0^-) \left(\frac{1}{ P_{(T,K)}(0) \sqrt{T}} + \sqrt{T} P_{(T,K)}'(0) - \frac{1}{4}  P_{(T,K)}(0)( P_{(T,K)}'(0))^2 T^{3/2}\right).
\end{align*}
The terms for $ P_{(T,K)}^{(3)}(0), P_{(T,K)}^{(4)}(0)$ are contained in \ref{sec:appendix_order}.
\begin{proof}
    The proof follows the same reasoning as \Cref{thm:expansion}, with the implicit equation given by \Cref{eq:implicit}.
\end{proof}
\end{theorem}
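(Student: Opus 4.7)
The plan is to mirror the argument of \Cref{thm:expansion}, applying implicit differentiation to the equation $f_{\text{spot}}(m,P_{(T,K)}(m)) = g_{\text{spot}}(m)$ from (\ref{eq:implicit}), where $f_{\text{spot}}$ coincides with $f$ of (\ref{eq:left-side}) and $g_{\text{spot}}$ is given by (\ref{eq:newg}). Before differentiating, I would verify that the implicit function theorem applies in a neighborhood of $m=0$: a direct computation of $\partial_P f_{\text{spot}}(0,P_{(T,K)}(0))$ recovers the Black-Scholes vega $S_0\sqrt{T}\,\phi(d_0^-)$, which is strictly positive whenever $P_{(T,K)}(0)>0$, so $P_{(T,K)}$ is locally well-defined and smooth in $m$ around $0$ and therefore admits a Taylor expansion there.

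I would then read off $P_{(T,K)}(0)$ by evaluating both sides of the implicit equation at $m=0$: the left-hand side reduces to $S_0\bigl[2\Phi(\tfrac{1}{2}P_{(T,K)}(0)\sqrt T)-1\bigr]$, whereas $g_{\text{spot}}(0)=S_0\sum_n\lambda_n\Sigma_n$ by the definition of $\Sigma_n$. Inverting $\Phi$ and using $\sum_n\lambda_n=1$ yields the stated closed form for $P_{(T,K)}(0)$. The higher-order coefficients are then obtained by successively differentiating the implicit equation in $m$ and solving algebraically,
\begin{equation*}
P_{(T,K)}^{(j)}(0) \;=\; \frac{1}{\partial_P f_{\text{spot}}(0,P_{(T,K)}(0))}\Bigl(g_{\text{spot}}^{(j)}(0) - \mathcal{R}_j\Bigr),
\end{equation*}
where $\mathcal{R}_j$ collects the mixed partial derivatives of $f_{\text{spot}}$ at $(0,P_{(T,K)}(0))$ together with the already-computed coefficients $P_{(T,K)}^{(i)}(0)$, $i<j$, assembled via the Fa\`a di Bruno formula. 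Substituting these into the Taylor polynomial and evaluating at $m=m(T,K)$ produces the claimed expansion, and the $\mathcal{O}(m^5)$ remainder is Taylor's theorem applied to the smooth function $P_{(T,K)}$.

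The key structural difference from \Cref{thm:expansion}, and the main obstacle I would handle with care, is the loss of symmetry. In the pure parameter randomization, $g$ enjoys a put-call-type symmetry under $m\mapsto -m$ at $\eta_n$ fixed, which forces $P^{(2k+1)}(0)=0$; this is why the expansion in \Cref{thm:expansion} contains only even powers up to order six. In the spot randomization the additional logarithmic shift $\log(\theta_n/S_0)$ inside each $d_n^\pm$ destroys this symmetry, so $P'(0)$ and $P^{(3)}(0)$ generically do not vanish and must be carried along --- which is precisely why the statement of \Cref{thm:expansion_spot} includes the odd-order terms and claims only a $\mathcal{O}(m^5)$ remainder. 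Beyond this conceptual point, the bulk of the remaining work is bookkeeping: repeated use of the Black-Scholes identity $(\theta_n/S_0)\phi(d_n^+)=\phi(d_n^-)$, which collapses many otherwise unwieldy expressions, together with the clean packaging of the intermediate quantities as the auxiliary variables $d_n^\pm, d_0^\pm, \Sigma_n, \Sigma_n', \Sigma_0'$ listed in the statement.
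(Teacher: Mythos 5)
Your proposal is correct and follows essentially the same route as the paper, which itself simply defers to the implicit-differentiation argument of \Cref{thm:expansion} applied to \Cref{eq:implicit}; your evaluation at $m=0$, the recursive formula $P^{(j)}(0)=\frac{1}{f_y}\bigl(g^{(j)}(0)-\mathcal{R}_j\bigr)$, and the observation that the shift $\log(\theta_n/S_0)$ breaks the symmetry responsible for vanishing odd-order terms all match the paper's reasoning (including its remark on that point). The additional check that $\partial_P f_{\text{spot}}=S_0\sqrt{T}\,\phi(d_0^-)>0$ is a welcome justification the paper leaves implicit.
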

\begin{rem}[Odd-order terms no longer vanish]
    Note that contrary to \Cref{thm:expansion}, the odd-order derivatives do not vanish. This is caused by the fact that the arguments of $\Phi(\cdot)$ in $g_\text{spot}(m)$ are no longer symmetric due to the addition of $\log(\theta_n/S_0)$. This means that the amount of terms in the expansion grows very quickly.
\end{rem}
We further note that the calculation of sensitivities, in particular the delta and gamma, is still feasible under the spot randomization. Since the surface $\hat\sigma_s(T,K; \overline{p}^*)$ is an analytic function of $S_0$, the derivative $\frac{\partial \hat\sigma_s(T,K; \overline{p}^*)}{\partial S_0}$ can be simply derived analytically. In the case a finite difference approach is used, the shocking of the spot price by a quantity $h$ will shift the random variable $\vartheta$ to $\vartheta + h$, which is then centered at $S_0+h$.
\subsection{Illustrative Example: Spot Randomization of Flat Surface}
We again first consider the simplest randomization, which is the flat volatility surface $\hat\sigma(T,K; \sigma) = \sigma$. Instead of replacing $\sigma$ with a random variable, we randomize the spot parameter using a lognormal distributed random variable $\vartheta$, such that $\log(\vartheta) \sim \mathcal{N}(\log(S_0)-\frac{\nu^2}{2} , \nu^2)$ given the parameter $\nu$. The particular choice for the average ensures that \begin{equation}
    \E[\vartheta] = \exp \left(\log(S_0)-\frac{\nu^2}{2} +\frac{\nu^2}{2} \right) = S_0,
\end{equation}
and therefore, that the randomization is arbitrage-free. We compute the prices, implied volatilities, and risk-neutral probability densities for various values of $\nu$ with $N_q=2$. The other parameters are $S_0=3,T=1, \sigma=0.12, r=0\%$ and the results are shown in \Cref{fig:rand_spot_flat}. The third graph shows that the risk-neutral probability density functions exhibit a bimodal shape, while the implied volatility is strongly concave. Next to the exact implied volatilities, we also plot 2nd to 4th-order approximations for the three shapes and highlight the convergence of the approximation to the exact implied volatility curve, which is derived using a root-finding scheme.
\begin{figure}[H]
    \centering
    \includegraphics[width=1\linewidth]{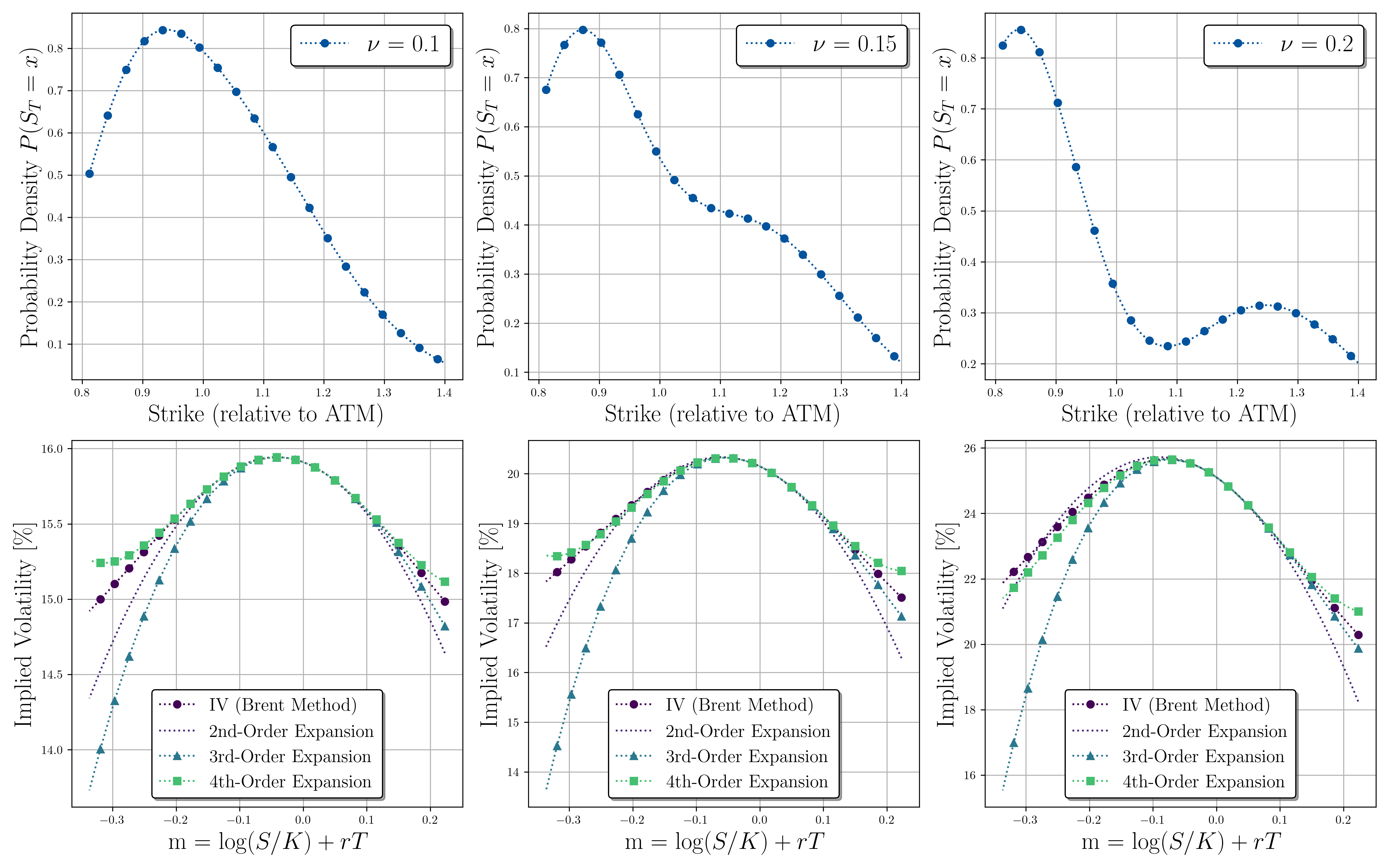}
    \caption{PDF and Implied Volatility of a randomized spot flat volatility surface. The implied volatility is derived from a root finder in the first few orders of approximations. The graph shows the convergence of the expansion to the reference graph. The mixture parameters are $(\lambda_1,\lambda_2,\theta_1,\theta_2) = (0.57,0.43,2.74,3.35), (0.61,0.39,2.64,3.57) $ and $(0.64,0.36,2.55,3.82)$ from left to right.}
    \label{fig:rand_spot_flat}
\end{figure}
We again examine the computational advantage of the analytical method compared to the ``exact" method through a root-finding algorithm (Brent's method). Since the root finder is repeated for every strike, the method is at least $\mathcal{O}(n)$. We run both the 4th order expansion and the root-finding algorithm for increasing numbers of strikes and compare the time used for calculation in \Cref{tbl:rand_spot_time}. 
\begin{table}[H]
    \centering
    \begin{tabular}{l|l|l|l}
    \hline
        Number of strikes & $10^2 $& $10^3$ & $10^4$ \\ \hline\hline
        Brent (s) & 0.23 & 2.36 & 26.61 \\ 
        4th-order expansion (s) & 0.004 & 0.004 & 0.004 \\ \hline
    \end{tabular}
    \caption{Comparison of expansion vs. Brent's method for increasing amount of strikes: The analytical method is not affected by the increase in strikes, while the Brent method is $\mathcal{O}(n)$.}
    \label{tbl:rand_spot_time}
\end{table}
\subsection{Fitting an Earnings Announcement Volatility Surface}
To demonstrate the capabilities of the randomization of the spot parameter, particularly in regard to earnings announcements, we consider an example of real market quotes. The company Amazon.com Inc. announced its earnings release for 2018 Q1 after market close on Apr 26th, 2018, which induced uncertainty in the market on the day of Apr 26th. We obtained the market quotes for options traded on the stock on the morning of the 26th (10:30 AM), which expire the next day on Apr 27th. \Cref{tbl:amzn} shows the summary statistics of the quotes which we obtain from the Cboe Data Shop~\cite{cboeSPX}.
\begin{table}[H]
    \centering
    \begin{tabular}{|l|l|}
    \hline
        Expiry Date & 2018/04/27 \\ \hline
        Spot & 1496.45 \\ \hline
        Min/Max Strike & 1255/1607.5\\ \hline
        N Quotes & 126 \\ \hline
    \end{tabular}
    \caption{Summary Statistics AMZN options}
    \label{tbl:amzn}
\end{table}
Using the quotes, we fit a randomized spot SABR parametrization on the options quotes, where the spot parameter is randomized with a log-normally distributed random variable $\vartheta$, such that $\log \vartheta \sim \mathcal{N}(\log(S_0)-\frac{\nu^2}{2}, \nu^2)$. For the discretization, we choose $N_q = 2$ and run an optimization algorithm to obtain the optimal parameters for $\overline{p}^* = (\beta,\alpha, \rho,\gamma,\nu)$. As a benchmark, we also fit a SABR and SVI-type parametrization, which are not randomized, to the data. Furthermore, we also fit a multimodal lognormal mixture parametrization, which is designed to produce multimodal mixture densities. \Cref{fig:randomized_spot_amzn} shows the fit of the implied volatility parametrizations, the market quotes, and the risk-neutral density from the randomized parametrization. The randomized spot parametrization is able to reproduce the shape of the implied volatility of the quotes. Furthermore, we see from the risk-neutral density this volatility shape stems from a bimodal probability density. The red line in the plot indicates the strike at the mean of the distribution (which is the forward-ATM). The SVI/SABR parametrizations, on the other hand, fail to reproduce the implied volatility shape entirely, fitting to an essentially straight line. 
\begin{figure}[H]
    \centering
    \includegraphics[width=\linewidth]{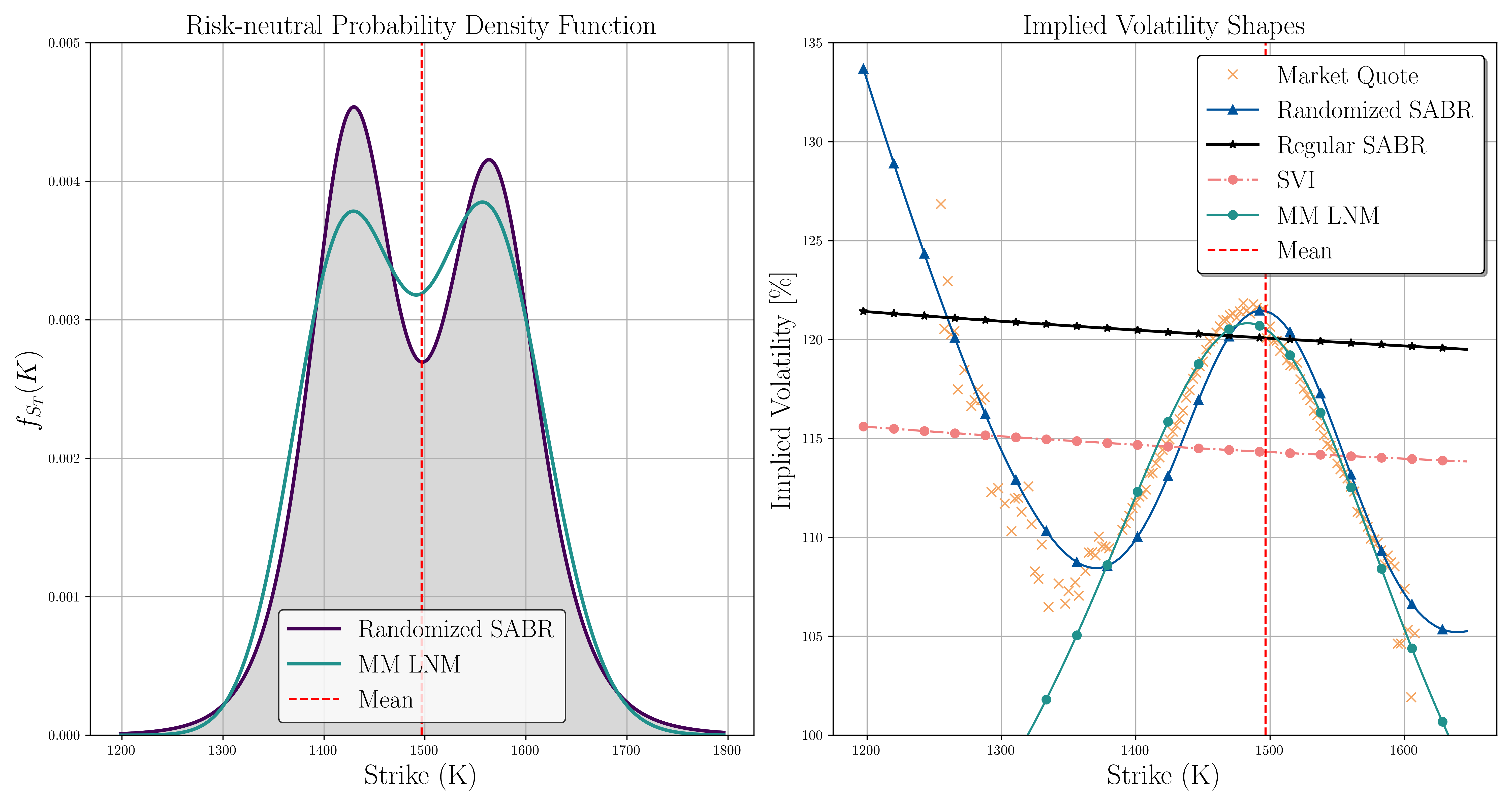}
    \caption{Risk-neutral PDF and implied volatilities of randomized parametrization. The fitted parameters for the randomized SABR are $(\beta,\alpha, \rho,\gamma,\nu) = (0.9,1.38,-0.03,10.99,0.46)$. This is equivalent to $\theta_1, \theta_2 = 1432, 1570$ with weights $\lambda_1,\lambda_2 = 0.53, 0.47$. }
    \label{fig:randomized_spot_amzn}
\end{figure}
%%%%%%%%%%%%%%%%%%%%%%%%%%%%%%%%%%%%%%%%%%%%%%%%%%%%%%%%%%%%
\section{Conclusion}
\label{sec:conclusion}
The construction of a clean implied volatility surface is essential for accurately modeling financial derivatives, as it transforms discrete market option prices into a continuous, arbitrage-free representation. A well-defined volatility surface allows traders and risk managers to better understand the pricing dynamics of options across different strikes and maturities, facilitating more informed decision-making in the trading of derivatives. The traditional methods of constructing these surfaces often face challenges, particularly in capturing the nuances of market behavior, especially for options with shorter maturities.

This paper introduces a novel generic method to enhance the flexibility of volatility surface parametrizations through parameter randomization. We formulated new parametric surfaces from existing ones by replacing one of the parameters with a random variable. The method first defines a distribution for a set of parameters and then formulates the randomization as the expectation of the European option price under the distribution. This induces a mixture-type behavior in the pricing surface of the options and through Breeden-Litzenberger~\cite{breeden1978prices} also in the risk-neutral probability density functions. The mixture increases the flexibility of the price and, thus, provides a better ability to fit the market quotes when traditional parametrizations fail. Lastly, we derive an expansion of the implied volatility surface as a function of the input parameters, which can be computed to an arbitrary degree of accuracy. We presented two examples of randomized surfaces and showed that the randomized SABR parametrization is able to fit the data better than a classical SABR parametrization or an SVI-type parametrization. 

In the second part of the paper, we utilized the randomization technique to formulate a randomized spot volatility surface. With this method, the spot parameter of the Black-Scholes formula is randomized using a random variable centered at the current spot price $S_0$. This type of randomization is particularly effective in fitting volatility shapes that imply risk-neutral distribution functions of multi-modal form, which often occur shortly prior to earnings announcements of equities. Again, we derived an expansion of the implied volatility based on the same techniques as before and showed that the parametrization fits well with the data for near-maturity options during earnings announcements, a task that is impossible for classical diffusion-based parametrizations. 
%%%%%%%%%%%%%%%%%%%%%%%%%%%%%%%%%%%%%%%%%%%%%%%%%%%%%%%%%%%
\bibliography{bib.bib}

@article{gatheral2014arbitrage,
  title={Arbitrage-free {SVI} volatility surfaces},
  author={Gatheral, Jim and Jacquier, Antoine},
  journal={Quantitative Finance},
  volume={14},
  number={1},
  pages={59--71},
  year={2014},
  publisher={Taylor \& Francis}
}

@article{Hagan:2002,
  author = {P.S. Hagan and D. Kumar and A.S. Le\'sniewski and D.E Woodward},
  title = {Managing smile risk},
  journal = {Wilmott Magazine},
  year = {2002},
  pages = {84-108}
}

@article{le2021arbitrage,
  title={An arbitrage-free interpolation of class {C2} for option prices},
  author={Le Floch, Fabien},
  journal={Journal of Derivatives},
  volume={28},
  number={4},
  pages={64--86},
  year={2021},
  publisher={Institutional Investor, Inc}
}

@article{mingone2022no,
  title={No arbitrage global parametrization for the {eSSVI} volatility surface},
  author={Mingone, Arianna},
  journal={Quantitative Finance},
  volume={22},
  number={12},
  pages={2205--2217},
  year={2022},
  publisher={Taylor \& Francis}
}

@article{corbetta2019robust,
  title={Robust calibration and arbitrage-free interpolation of {SSVI} slices},
  author={Corbetta, Jacopo and Cohort, Pierre and Laachir, Ismail and Martini, Claude},
  journal={Decisions in Economics and Finance},
  volume={42},
  number={2},
  pages={665--677},
  year={2019},
  publisher={Springer}
}

@article{homescu2011implied,
  title={Implied volatility surface: Construction methodologies and characteristics},
  author={Homescu, Cristian},
  journal={arXiv preprint arXiv:1107.1834},
  year={2011}
}

@article{grzelak2019stochastic,
  title={The stochastic collocation {Monte Carlo} sampler: highly efficient sampling from expensive distributions},
  author={Grzelak, Lech A and Witteveen, Jeroen AS and Suarez-Taboada, Maria and Oosterlee, Cornelis W},
  journal={Quantitative Finance},
  volume={19},
  number={2},
  pages={339--356},
  year={2019},
  publisher={Taylor \& Francis}
}

@article{brigo2002lognormal,
  title={Lognormal-mixture dynamics and calibration to market volatility smiles},
  author={Brigo, Damiano and Mercurio, Fabio},
  journal={International Journal of Theoretical and Applied Finance},
  volume={5},
  number={4},
  pages={427--446},
  year={2002},
  publisher={World Scientific}
}

@article{andreasen2011volatility,
  title={Volatility interpolation},
  author={Andreasen, Jesper and Huge, Brian},
  journal={Risk},
  volume={24},
  number={3},
  pages={76},
  year={2011},
  publisher={Incisive Media Limited}
}

@article{guo2016generalized,
  title={Generalized arbitrage-free {SVI} volatility surfaces},
  author={Guo, Gaoyue and Jacquier, Antoine and Martini, Claude and Neufcourt, Leo},
  journal={SIAM Journal on Financial Mathematics},
  volume={7},
  number={1},
  pages={619--641},
  year={2016},
  publisher={SIAM}
}

@phdthesis{roper2009implied,
  title={Implied volatility: General properties and asymptotics},
  author={Roper, Michael Paul Veran},
  year={2009},
  school={UNSW Sydney}
}

@article{breeden1978prices,
  title={Prices of state-contingent claims implicit in option prices},
  author={Breeden, Douglas T and Litzenberger, Robert H},
  journal={Journal of Business},
  volume={51},
  number={4},
  pages={621--651},
  year={1978},
  publisher={JSTOR}
}

@misc{cboeSPX,
  author = {Cboe},
  title = {{Cboe DataShop}},
  year = 2024,
  howpublished = {https://datashop.cboe.com/},
  note = {Online; accessed 29th September 2024} 
}

@article{alexiou2023pricing,
  title={Pricing event risk: Evidence from concave implied volatility curves},
  author={Alexiou, Lykourgos and Goyal, Amit and Kostakis, Alexandros and Rompolis, Leonidas},
  journal={Swiss Finance Institute Research Paper},
  pages={21-48},
  year={2023}
}

@book{Oosterlee_Grzelak_2020, place={London}, title={Mathematical Modeling and Computation in Finance: With exercises and Python and MATLAB computer codes}, publisher={World Scientific Publishing Europe Ltd}, author={Oosterlee, C. W. and Grzelak, Lech A.}, year={2019}}

@article{glasserman2023w,
  title={W-shaped implied volatility curves and the {Gaussian} mixture model},
  author={Glasserman, Paul and Pirjol, Dan},
  journal={Quantitative Finance},
  volume={23},
  number={4},
  pages={557--577},
  year={2023},
  publisher={Taylor \& Francis}
}

@book{krantz2002implicit,
  title={The implicit function theorem: history, theory, and applications},
  author={Krantz, Steven George and Parks, Harold R},
  year={2002},
  publisher={Springer Science \& Business Media}
}

@article{golub1969calculation,
  title={Calculation of {Gauss} quadrature rules},
  author={Golub, Gene H and Welsch, John H},
  journal={Mathematics of computation},
  volume={23},
  number={106},
  pages={221--230},
  year={1969}
}

@article{fengler2009arbitrage,
  title={Arbitrage-free smoothing of the implied volatility surface},
  author={Fengler, Matthias R},
  journal={Quantitative Finance},
  volume={9},
  number={4},
  pages={417--428},
  year={2009},
  publisher={Taylor \& Francis}
}

@book{gatheral2011volatility,
  title={The volatility surface: A Practitioner's Guide},
  author={Gatheral, J},
  year={2011},
  publisher={John Wiley and Sons, Inc}
}

@article{github,
title = {On randomization of affine diffusion processes with application to pricing of options on {VIX} and {S\&P} 500},
journal = {Applied Mathematics and Computation},
volume = {508},
pages = {129598},
year = {2026},
issn = {0096-3003},
doi = {https://doi.org/10.1016/j.amc.2025.129598},
url = {https://www.sciencedirect.com/science/article/pii/S0096300325003248},
author = {Lech A. Grzelak},
}

@article{bender2020arbitrage,
  title={Arbitrage-free interpolation of call option prices},
  author={Bender, Christian and Thiel, Matthias},
  journal={Statistics \& Risk Modeling},
  volume={37},
  number={1-2},
  pages={55--78},
  year={2020},
  publisher={De Gruyter Oldenbourg}
}

@article{franccois2022venturing,
  title={Venturing into uncharted territory: An extensible implied volatility surface model},
  author={Fran{\c{c}}ois, Pascal and Galarneau-Vincent, R{\'e}mi and Gauthier, Genevi{\`e}ve and Godin, Fr{\'e}d{\'e}ric},
  journal={Journal of Futures Markets},
  volume={42},
  number={10},
  pages={1912--1940},
  year={2022},
  publisher={Wiley Online Library}
}

@article{ackerer2020deep,
  title={Deep smoothing of the implied volatility surface},
  author={Ackerer, Damien and Tagasovska, Natasa and Vatter, Thibault},
  journal={Advances in Neural Information Processing Systems},
  volume={33},
  pages={11552--11563},
  year={2020}
}

@article{rebonato2004unconstrained,
  title={Unconstrained fitting of implied volatility surfaces using a mixture of normals},
  author={Rebonato, Riccardo and Cardoso, Maria Teresa},
  journal={Journal of Risk},
  volume={7},
  pages={55--74},
  year={2004}
}

@article{bloch2011smiling,
  title={Smiling at evolution},
  author={Bloch, Daniel Alexandre and Coello, CA Coello and Securities, Mizuho and others},
  journal={Applied Soft Computing},
  volume={11},
  number={8},
  pages={5724--5734},
  year={2011},
  publisher={Elsevier}
}

@article{wilkens2005option,
  title={Option pricing based on mixtures of distributions: Evidence from the Eurex index and interest rate futures options market},
  author={Wilkens, Sascha},
  journal={Derivatives Use, Trading \& Regulation},
  volume={11},
  pages={213--231},
  year={2005},
  publisher={Springer}
}

@article{chataigner2020deep,
  title={Deep local volatility},
  author={Chataigner, Marc and Cr{\'e}pey, St{\'e}phane and Dixon, Matthew},
  journal={Risks},
  volume={8},
  number={3},
  pages={82},
  year={2020},
  publisher={MDPI}
}

@article{gonon2024operator,
  title={Operator Deep Smoothing for Implied Volatility},
  author={Gonon, Lukas and Jacquier, Antoine and Wiedemann, Ruben},
  journal={arXiv preprint arXiv:2406.11520},
  year={2024}
}

@article{jacquier2019randomized,
  title={The randomized Heston model},
  author={Jacquier, Antoine and Shi, Fangwei},
  journal={SIAM Journal on Financial Mathematics},
  volume={10},
  number={1},
  pages={89--129},
  year={2019},
  publisher={SIAM}
}

@article{jacquier2018black,
  title={Black--Scholes in a CEV random environment},
  author={Jacquier, Antoine and Roome, Patrick},
  journal={Mathematics and Financial Economics},
  volume={12},
  pages={445--474},
  year={2018},
  publisher={Springer}
}
\appendix
\section{Quadrature Pairs $\{ \lambda_n, x_n\}_{n \leq N_q}$ for Expectations}
\label{sec:appendix_qp}
Here, we present the derivation of the quadrature points to calculate the expectation of the randomized pricing surface. The section is based on~\cite{grzelak2019stochastic} and Golub and Welsch~\cite{golub1969calculation}, which establish the algorithm for the computation of the quadrature points for random variables based on the moments. 

The Gaussian quadrature points enable an efficient approximation of the expectation $\E[g(X)]$ given an arbitrary function $g(\cdot)$ and a random variable $X$ with law $\mu$. Since the expectation is an integral over the domain $\mathcal{D}$ of $X$, the integral can be written as:
\begin{equation}
    \E[g(X)] = \int_{\mathcal{D}} g(x) \d \mu(x) \approx \sum_{n=1}^{N_q} \lambda_n g(x_n),
\end{equation}
where $\{ \lambda_n, x_n\}_{n \leq N_q}$ points are chosen in an optimal way, which we establish below. The algorithm is general for any type of expectation and requires only an efficient computation of the moments $\mu_i = \E[X^i]$ of $X$. The foundation of the algorithm is to establish a sequence of \emph{polynomials} $p_0(x),p_1(x),\dots$, which are orthonormal with respect to $X$, i.e., such that
\begin{equation}
    \E[p_i(X)p_j(X)] = \begin{cases} 1 \quad &\text{ if } \quad  i=j,\\
     0 \quad &\text{ otherwise. }
    \end{cases}
\end{equation}
In this case, the quadrature points $\theta_n$ of degree $N_q$ are given by the roots $p_{N_q}(\theta_n) = 0$ of $p_{N_q}(x)$, and the quadrature weights are given by a formula. Finding the quadrature pairs is thus reduced to obtaining a sequence of orthonormal polynomials. The polynomial sequence can be constructed from monomials $m_n(x) = m^n$, whose expectations are exactly equal to the moments $\mu_n$ of $X$. Denoting $\mu_{i,j} = \mu_{i+j} = \E[X^iX^j]$, the orthonormal polynomials can be calculated the following way: first, we compute the $N_q+1$-dimensional Gram-matrix $M$ given by
\begin{equation}
    M = \begin{bmatrix}
\mu_{0,0} & \mu_{0,1} & \cdots & \mu_{0,N_q} \\
\mu_{1,0} & \mu_{1,1} & \cdots & \mu_{1,N_q} \\
\vdots    & \vdots    & \ddots & \vdots    \\
\mu_{N_q,0} & \mu_{N_q,1} & \cdots & \mu_{N_q,N_q}
\end{bmatrix},
\end{equation}
containing all moments until $2N_q$. Since the matrix is symmetric and positive semi-definite, we can use the Cholesky-decomposition $M= R^TR$ to obtain the triangular matrix $R$. Next, we calculate the quantities $\alpha_j$ and $\beta_j$, defined as
\begin{equation}
    \alpha_j = \frac{r_{j,j+1}}{r_j,j} - \frac{r_{j-1,j}}{r_{j-1,j-1}},\quad j = 1,\dots,N_q,\quad \text{and} \quad \beta_j = \left(\frac{r_{j+1,j+1}}{r_{j,j}}\right)^2, \quad j = 1, \dots N_q-1,\end{equation}
where $r_{i,j}$ are the elements of the matrix $R$. Using the coefficients $\alpha, \beta$, we can recursively define the polynomial sequence $p_n(x)$ as 
\begin{equation}
    p_{j+1} = (x-\alpha_j)p_j(x) - \beta_jp_{j-1}, \quad  j = 1, \dots N_q-1,
\end{equation}
with $p_0 \equiv 0, p_1 \equiv 1$. One can show that the polynomials are indeed orthonormal and thus that the quadrature points are given by the roots of $p_{N_q}$. A bit of linear algebra shows that the roots can be found by the eigenvalues of the matrix
\begin{equation}
    J:= \begin{bmatrix} \alpha_1& \sqrt{\beta_1}& 0 &0&\dots&0\\
    \sqrt{\beta_1} &\alpha_2& \sqrt{\beta_2}& 0&\dots&0\\
    &\dots&&&\dots&\\
    0&\dots&0&\sqrt{\beta_{N_q-2}} &\alpha_{N_q-1}& \sqrt{\beta_{N_q-1}}\\
    0&\dots&0&0&\sqrt{\beta_{N_q-1}} &\alpha_{N_q}
    \end{bmatrix},
\end{equation}
and additionally the quadrature weights $\lambda_n = (v^n_1)^2$ by the square of the first row of the $n$-th eigenvalue $v_n$ (see \cite{golub1969calculation} for an extensive discussion). An implementation of the algorithm in Python and MATLAB can be found on \href{https://github.com/LechGrzelak/Randomization}{GitHub}~\cite{github}.

%##########################################################################################
\section{Supplementary Results for SPX Fitting}
In this appendix, we provide additional information and results for the SPX fitting experiment. 

\Cref{tbl:data-summary} summarizes the data of the option chains and expiries available in the set. Included in the set were all options with a valid bid quote at the EOD snapshot for the main expiry dates. When both put and call quotes are available we select the most liquid quote per strike based on the open interest, which are the out-the-money quotes. The quotes were not cleaned for arbitrage before fitting.
    \begin{table}[H]
    \centering
    \begin{tabular}{c|c|c|c|c}
    \hline
        \textbf{Expiry Date} & \textbf{Days to Maturity} & \textbf{N Quotes} & \textbf{Min Strike} & \textbf{Max Strike} \\ \hline\hline
        16-Aug-2024 & 16 & 364 & 3225 & 6140 \\ 
        20-Sep-2024 & 51 & 352 & 3150 & 7100 \\ 
        18-Oct-2024 & 79 & 339 & 3000 & 8000 \\ 
        15-Nov-2024 & 107 & 289 & 3000 & 8400 \\ 
        20-Dec-2024 & 142 & 233 & 2400 & 9000 \\ 
        17-Jan-2025 & 170 & 228 & 2000 & 9000 \\ 
        21-Feb-2025 & 205 & 147 & 1800 & 9000 \\ 
        21-Mar-2025 & 233 & 148 & 1400 & 9000 \\ 
        20-Jun-2025 & 324 & 160 & 1400 & 9000 \\ 
        19-Sep-2025 & 415 & 144 & 1400 & 8800 \\ 
        19-Dec-2025 & 506 & 144 & 1400 & 12000 \\ 
        18-Dec-2026 & 870 & 88 & 1400 & 12000 \\ 
        17-Dec-2027 & 1234 & 58 & 1400 & 12000 \\ 
    \end{tabular}
    \caption{Data summary implied volatilities quotes on Jul 31st, 2024. The strike range is chosen based on sufficiently liquid quotes available.}
    \label{tbl:data-summary}
\end{table}
\Cref{fig:aug24} and \Cref{fig:sep24} show the fitting results for the randomized SABR vs the two benchmarks, and \Cref{tab:mseaug}, \Cref{tab:msesep} show the mean squared errors. In both cases, the randomized SABR performs unanimously the best, matching the market quotes closely.
\label{sec:appendix_results}
\begin{figure}[H]
    \centering
    \includegraphics[width=\linewidth]{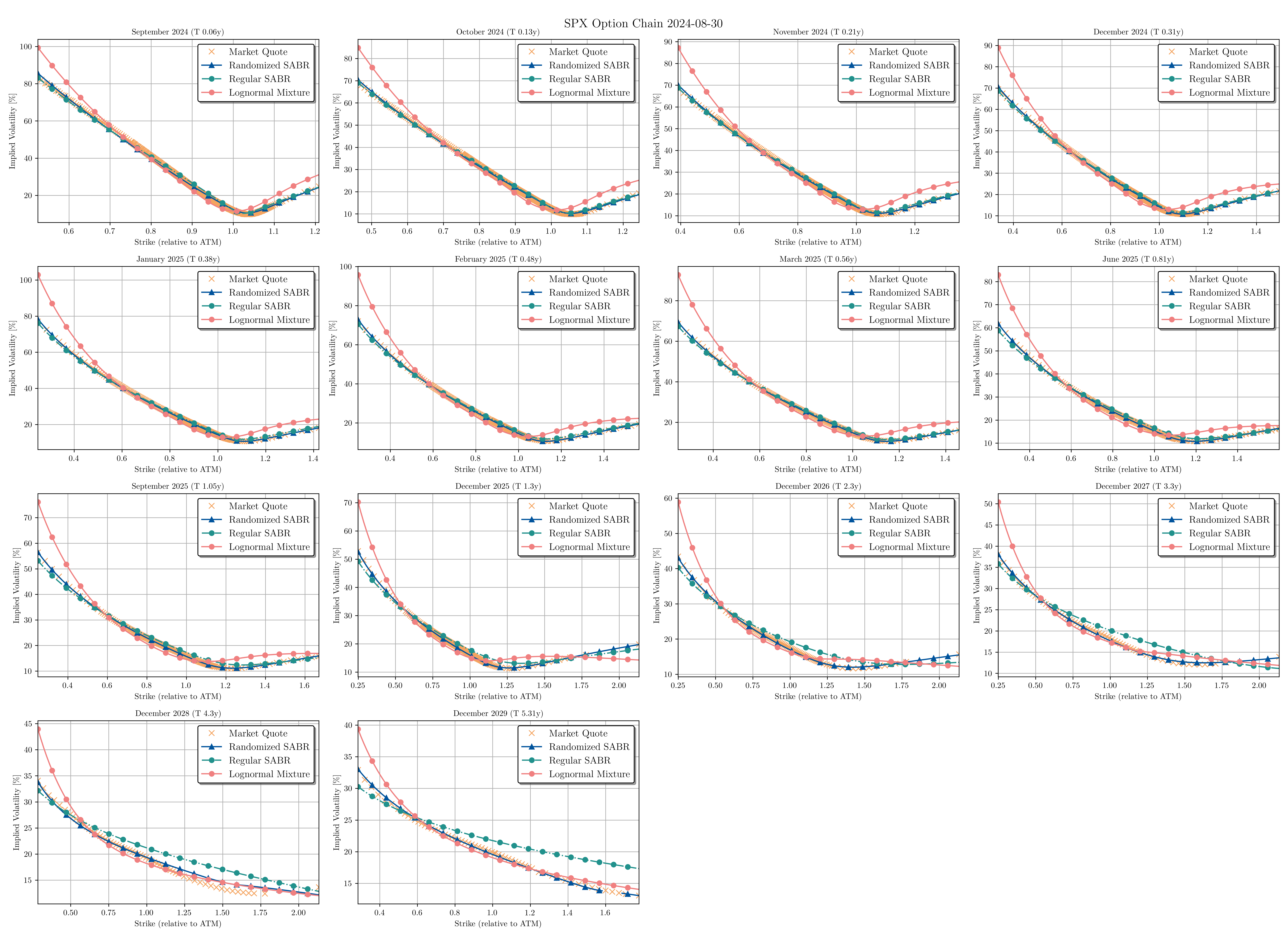}
    \caption{Option Chain for Aug 30 2024}
    \label{fig:aug24}
\end{figure}
\begin{figure}[H]
    \centering
    \includegraphics[width=\linewidth]{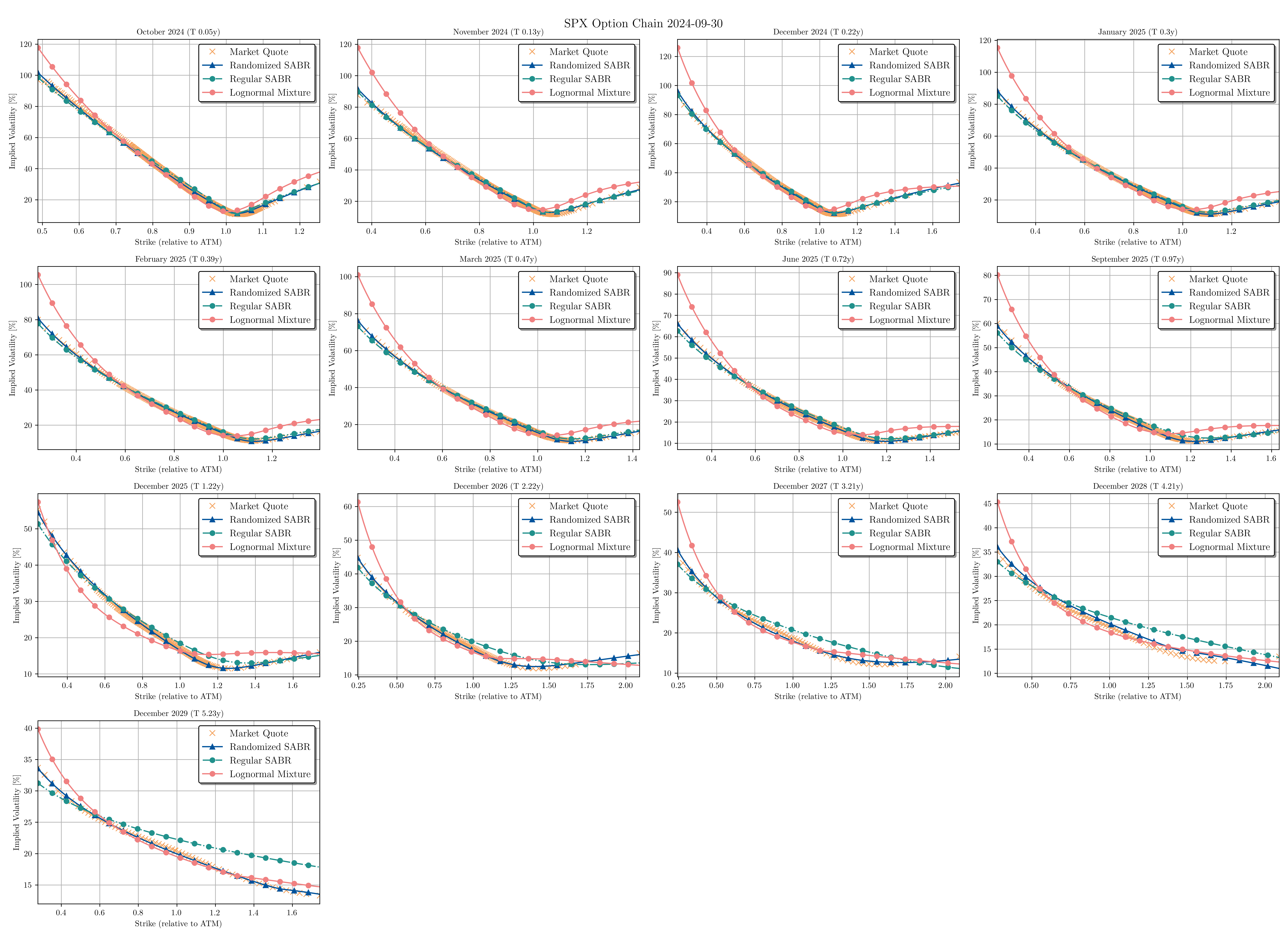}
    \caption{Option Chain for Sep 30 2024}
    \label{fig:sep24}
\end{figure}
\begin{table}[H]
    \centering
    \begin{tabular}{l|l|l|l|}
     \hline
    Aug 30 2024&\multicolumn{3}{c}{\textbf{Mean Squared Errors (MSE)}}\\
      \hline
    \hline
        Expiry Month & Lognormal Mixture& Regular SABR & Randomized SABR  \\ \hline \hline
Sep 24 & 1.14E-03 & 1.68E-04 & 7.59E-05 \\ \hline
Oct 24 & 1.05E-03 & 4.31E-05 & 2.54E-05 \\ \hline
Nov 24 & 1.08E-03 & 4.62E-05 & 1.70E-05 \\ \hline
Dec 24 & 1.05E-03 & 5.62E-05 & 6.90E-06 \\ \hline
Jan 25 & 1.27E-03 & 7.16E-05 & 8.20E-06 \\ \hline
Feb 25 & 1.22E-03 & 8.59E-05 & 5.20E-06 \\ \hline
Mar 25 & 1.81E-03 & 6.89E-05 & 5.10E-06 \\ \hline
Jun 25 & 1.43E-03 & 1.19E-04 & 1.90E-06 \\ \hline
Sep 25 & 1.08E-03 & 1.67E-04 & 2.40E-06 \\ \hline
Dec 25 & 8.33E-04 & 1.91E-04 & 6.00E-06 \\ \hline
Dec 26 & 8.60E-04 & 3.24E-04 & 1.04E-05 \\ \hline
Dec 27 & 7.20E-04 & 3.42E-04 & 1.01E-05 \\ \hline
Dec 28 & 4.62E-04 & 5.01E-04 & 6.07E-05 \\ \hline
Dec 29 & 2.31E-04 & 5.83E-04 & 7.00E-06 \\  \hline
    \end{tabular}

    \caption{Mean Square Error across entire dataset - Aug 30}
    \label{tab:mseaug}
\end{table}
\begin{table}[H]
    \centering 
    \begin{tabular}{l|l|l|l|}
     \hline
    Sep 30 2024&\multicolumn{3}{c}{\textbf{Mean Squared Errors (MSE)}}\\
      \hline
    \hline
        Expiry Month & Lognormal Mixture& Regular SABR & Randomized SABR  \\ \hline \hline
Oct 24 & 1.24E-03 & 2.97E-04 & 7.41E-05 \\ \hline
Nov 24 & 1.78E-03 & 9.29E-05 & 5.65E-05 \\ \hline
Dec 24 & 1.74E-03 & 7.08E-05 & 3.00E-05 \\ \hline
Jan 25 & 1.45E-03 & 9.37E-05 & 8.00E-06 \\ \hline
Feb 25 & 1.40E-03 & 9.36E-05 & 7.30E-06 \\ \hline
Mar 25 & 1.47E-03 & 1.04E-04 & 5.10E-06 \\ \hline
Jun 25 & 1.70E-03 & 1.43E-04 & 5.70E-06 \\ \hline
Sep 25 & 1.18E-03 & 1.83E-04 & 1.84E-05 \\ \hline
Dec 25 & 8.64E-04 & 2.10E-04 & 1.44E-05 \\ \hline
Dec 26 & 1.01E-03 & 3.30E-04 & 1.90E-05 \\ \hline
Dec 27 & 8.44E-04 & 4.01E-04 & 3.37E-05 \\ \hline
Dec 28 & 5.25E-04 & 5.29E-04 & 9.87E-05 \\ \hline
Dec 29 & 2.22E-04 & 6.17E-04 & 6.20E-06 \\ \hline
         \hline
    \end{tabular}
    \caption{Mean Square Error across entire dataset - Sep 30}
    \label{tab:msesep}
\end{table}
%###################
\section{Higher-Order (3rd,4th) Terms Calculation for Spot Randomization }
\label{sec:appendix_order}
Here, we present the 3rd and 4th order from the spot randomization of \Cref{thm:expansion_spot}. We derived the terms by separately computing the partial derivatives $f_x = \frac{\partial f_\text{spot}}{\partial x }f_y = \frac{\partial f_\text{spot}}{\partial y }$ of $f_\text{spot}$, its higher order partial derivatives, and the derivative of the function $g_\text{spot}(m)$. Then, the expansion terms are given as
\begin{align*}
    P^{(3)} &=\frac{1}{f_y} \bigg[g^{(3)} - f_{xxx} - 3 f_{xxy} P' - 3 f_{xyy} P'{}^2 - 3 f_{xy}  P^{(2)} - 3 f_{yy} P' P^{(2)} - f_{yyy} P'{}^3\bigg]\\
    P^{(4)} &=\frac{1}{f_y} \bigg[
    g^{(4)} - f_{xxxx} - f_{yyyy} P'{}^4 - 4 f_{xyyy} P'{}^3 - 4 f_{yxxx} P' - 6 f_{xxyy} P'{}^2 - 6 f_{yyy} P'{}^2 P^{(2)} - 12 f_{xyy} P' P^{(2)} \\
    &\quad - 6 f_{xy} P^{(2)} - 4 f_{yy} P^{(3)} P' - 6 f_{yy} P^{(2)} - 4 f_{xy} P^{(3)} \bigg],
\end{align*}
where we have used the short-hand notation 
\begin{equation*}
    P = P_{(T,K)}(0), \quad P' = P_{(T,K)}'(0), \quad P^{(2)} = P_{(T,K)}^{(2)}(0), \quad P^{(3)} = P^{(3)}(0).
\end{equation*}
This yields the equations
\begin{align*}
P^{(3)} &= \sum_{n=1}^{N_q} \Bigg[\frac{1}{16 T^2} \Bigg( \frac{T \left( 48 P' + P \left( 24 + P T \left( 24 P^{(2)} - 6 P P'{}^2 T + P'{}^3 T \left( 4 - P^2 T \right) + 12 P' \left( -1 + P P^{(2)} T \right) \right) \right) \right)}{P^2} \\
&\quad - \frac{24 \e^{\frac{1}{8} T \left( P - \eta \right) \left( P + \eta \right) - \frac{\beta_n^2}{2 T \eta^2}} T \sqrt{\e^{\beta_n}} \lambda_n}{\eta} \\
&\quad - 16 \e^{\frac{P^2 T}{8}} \sqrt{2 \pi} T^{3/2} \left( \Phi\left( \frac{P \sqrt{T}}{2} \right) - \lambda_n \Phi\left( \frac{T \eta^2 - 2 \beta_n}{2 \sqrt{T} \eta} \right) \right) \\
&\quad - \frac{16 \e^{\frac{1}{8} T \left( P - \eta \right) \left( P + \eta \right) - \frac{\beta_n^2}{2 T \eta^2}} \sqrt{\e^{\beta_n}} \lambda_n \beta_n}{\eta^3} \Bigg)\Bigg]\\
P^{(4)} &= \sum_{n=1}^{N_q} \Bigg[\frac{\eta^2}{64 P^3 T^2 \eta^5}   \Bigg( 64 + T \eta^3 \Bigg( -768 P'{}^2 + 384 P \left( -P' + P^{(2)} \right) \\
&\quad + 8 P^5 P'{}^2 \left( P' - 3 P^{(2)} \right) T^3 + P^6 P'{}^4 T^4 - 16 P^2 \left( 7 + 6 P'{}^2 T \right) \\
&\quad + 32 P^3 T \left( P' - 3 P^{(2)} + 4 P^{(3)} - P'{}^2 \left( P' - 3 P^{(2)} \right) T \right) \\
&\quad + 4 P^4 T^2 \left( 24 P^{(2)} + P' \left( 6 P' - 24 P^{(2)} + 16 P^{(3)} - 3 P'{}^3 T \right) \right)  \\
&\quad + 16 \e^{\frac{1}{8} T \left( P - \eta \right) \left( P + \eta \right)} P^3 \left( -4 + 7 T \eta^2 \right) \sqrt{\e^{\beta_n}} \lambda_n \Bigg) \\
&\quad + 32 \Sigma_n P^3 \Bigg( \frac{\e^{\frac{P^2 T}{8}}}{\Sigma_n} \sqrt{2 \pi} T^{5/2} \eta^5 \left( \Phi\left(\frac{P \sqrt{T}}{2}\right) - \lambda_n \Phi\left(\frac{T \eta^2 - 2 \beta_n}{2 \sqrt{T} \eta}\right) \right) \\
&\quad + 4 \Sigma_n \e^{\frac{1}{8} T \left( P - \eta \right) \left( P + \eta \right)} T \eta^2 \sqrt{\e^{\beta_n}} \lambda_n \beta_n \\
&\quad + 2 \e^{\frac{1}{8} T \left( P - \eta \right) \left( P + \eta \right)} \sqrt{\e^{\beta_n}} \lambda_n \beta_n^2 \Bigg)  \Bigg],
\end{align*}
with 
\begin{align*}
    &\beta_n = \log(\theta_n/S_0) \quad \text{ and }\quad  \Sigma_n = \e^{-\frac{\beta_n^2}{2 T \eta^2}}.
\end{align*}
\end{document}